\documentclass[letterpaper]{article}

\usepackage[a4paper,includefoot,margin=3cm]{geometry}

\usepackage[T1]{fontenc}
\usepackage[utf8]{inputenc}
\usepackage{amsmath,amssymb,amsthm}
\usepackage{mathtools}
\usepackage{microtype}
\usepackage{graphicx}
\usepackage{xcolor}
\usepackage[hidelinks]{hyperref}
\usepackage{orcidlink}

\usepackage{tikz}
\usetikzlibrary{patterns,decorations.pathreplacing}

\tikzset{
    bvertex/.style={circle,draw,fill=black,inner sep=1.5pt},
    bsvertex/.style={circle,draw, inner sep=1.5pt},
    avertex/.style={draw,fill=lightgray,inner sep=2pt}
}

\usepackage{complexity}
\usepackage{enumitem}

\DeclareMathOperator*{\argmin}{argmin}

\DeclareMathOperator{\seg}{seg}
\DeclareMathOperator{\opt}{opt}

\newcommand{\NN}{\mathbb{N}}
\newlang{\TP}{3PART}

\DeclareMathOperator{\Cater}{\mathbf{Cater}}

\usepackage{mathtools}
\DeclarePairedDelimiter{\ceil}{\lceil}{\rceil}

\DeclareMathOperator{\cost}{cost}
\DeclareMathOperator{\dist}{dist}

\usepackage[disable]{todonotes}

\newcommand{\gic}{\textsc{Graph in Caterpillar}}
\newcommand{\gicShort}{\textsc{GiC}}

\newcommand{\tic}{\textsc{Tree in Caterpillar}}
\newcommand{\ticShort}{\textsc{TiC}}

\usepackage{etoolbox}

\newcommand{\appsymb}{$\star$}
\newcommand{\appref}[1]{{\hyperref[proof:#1]{\appsymb}}}

\newcommand{\OptproblemDef}[3]{%
\begin{center}
	\setlength{\tabcolsep}{2pt}
	\begin{tabular}{@{}lp{10cm}@{}}
		\multicolumn{2}{@{}l}{\textsc{#1}} \\% Title
		\textbf{Input:} & #2 \\% Input
		\textbf{Task:} & #3 \\% Question
	\end{tabular}
\end{center}%
}

\usepackage[capitalise]{cleveref}

\usepackage{thmtools}
\theoremstyle{plain}
\newtheorem{theorem}{Theorem}
\newtheorem{lemma}[theorem]{Lemma}

\newtheorem{corollary}[theorem]{Corollary}
\newtheorem{claim}[theorem]{Claim}

\theoremstyle{definition}

\theoremstyle{remark}

\newtheorem{remark}[theorem]{Remark}

\usepackage{authblk}

\title{Designing Caterpillars for Graphs: Approximation and Hardness}
\author[1]{Leon Kullmann}
\author[1]{Phuoc Lucky Trinh}
\author[2]{Leon Kellerhals}
\author[3]{Mitja Krebs}
\author[1]{André Nichterlein}
\author[3,4]{Stefan Schmid}

\affil[1]{Algorithmics and Computational Complexity, Technische Universität Berlin, Germany}
\affil[2]{Computational Intelligence Research Group, Technische Universität Clausthal, Germany}
\affil[3]{Intelligent Networks, Technische Universität Berlin, Germany}
\affil[4]{Fraunhofer SIT, Germany}

\begin{document}

\maketitle

% TODO mandatory: add short abstract of the document
\begin{abstract}
	The classical \textsc{Minimum Linear Arrangement} (MLA) problem has been studied extensively.
    It is known to be \NP-hard and it admits an $O(\sqrt{\log n}\log\log n)$-approximation [Feige and Lee, IPL, 2007].
	MLA can be defined as follows as design problem:
	Given a graph \(G\) with vertex set~$V(G)$, design a path~$H$ on the same vertex set that minimizes the linear arrangement cost~\(\sum_{uv\in E(G)}\dist_H(u,v)\), where~$\dist_H(u,v)$ indicates the distance of~$u$ and~$v$ in~$H$.
% 	A linear arrangement of a graph \(G\) is a path \(H\) on vertex set \(V(G)\), and its cost is \(\sum_{uv\in E(G)}\dist_H(u,v)\).
% 	Minimizing this cost---
% 	but almost always with \(H\) fixed to be a path.
	We initiate the study of the generalization in which \(H\) is allowed to be a caterpillar graph of maximum degree at most~\(\Delta\).
	Caterpillars are the simplest generalization of paths, having pathwidth one and interpolating between paths and stars via the degree parameter \(\Delta\).
	We give an algorithm that lifts any \(\alpha\)-approximation for MLA to an \((\alpha+3-2/(\Delta-1))\)-approximation for our problem, thus obtaining an $O(\sqrt{\log n}\log\log n)$-approximation for our more general problem as well.
% 	, and a ratio approaching \(\alpha\) as the optimum grows.
%	, thereby also achieving a similar approximation factor in general.
	Moreover, we derive a \(4\)-approximation whenever MLA is polynomial-time solvable, in particular, for trees.
% 	On the hardness side, 
	Complementing these results, we prove \NP-hardness for every constant \(\Delta\geq 2\), and, in stark contrast to MLA, show it remains \NP-hard on trees when \(\Delta\) is part of the input.
\end{abstract}

% \todo[inline]{
% 	Page limit: 12 pages excluding references, title page, and appendix.\\
%   General TODOs:\\
%   - Edge notation \(\{u,v\}\) versus \(uv\): DONE\\
%   - Notation for distance \(\dist\) versus \(d\): DONE \\
%   - cost vs. costs: DONE \\
%   - Problem names: in \textsc{TextSC}! : DONE\\
%   - Please not too long lines (git tries to merge line by line).\\
%   - make sure to use \texttt{\textbackslash[...\textbackslash]} instead of double dollars; especially when using lipics.
% }

% \newpage

\section{Introduction}

A \emph{linear arrangement} (also called \emph{linear layout}) of a graph \(G\) on \(n\) vertices is a bijection~\(\pi\colon V(G)\to[n]\).
% , that is, an arrangement of the vertices of \(G\) on a line.
Linear arrangement problems have been studied for over half a century, motivated by applications ranging from VLSI circuit design and numerical linear algebra to scheduling, graph drawing, and computational biology~\cite{diaz2002survey}.
There are various objectives that quantify how well the edges of $G$ are preserved by such an arrangement.
Three objectives have received particular attention:
the \emph{bandwidth} (or dilation), \(\max_{uv\in E(G)}|\pi(u)-\pi(v)|\);
the \emph{cutwidth} (or congestion), the maximum number of edges of \(G\) crossing any cut of the line; and
the \emph{linear arrangement cost}, \(\sum_{uv\in E(G)}|\pi(u)-\pi(v)|\).
The computational problems minimizing the corresponding objective are called \textsc{Bandwidth}, \textsc{Cutwidth}, and \textsc{Minimum Linear Arrangement} (MLA).

All three problems are \NP-hard on arbitrary input graphs~\cite{garey1974some,gavril1977some,papadimitriou1976npcompleteness}, and a substantial body of work has mapped out their complexity as one varies the input graph \(G\).
When \(G\)~is a tree, for instance, the picture splits:
\textsc{Bandwidth} remains \NP-hard~\cite{garey1979computers}, whereas both \textsc{Cutwidth}~\cite{yannakakis1985polynomial} and MLA~\cite{chung1984optimal} admit polynomial-time algorithms.
Tractable or well-approximable cases of MLA are known for hypercubes, rectangular grids, dense graphs, and planar graphs, and a long line of work on general graphs has driven the best approximation ratio down to \(O(\sqrt{\log n}\log\log n)\) due to Feige and Lee~\cite{feige2007improved}.

% Equivalently, 
% A linear arrangement of \(G\) can be seen as a bijection of the vertices of~$G$ into the vertices of a path~$H$ with~$n$ vertices.
An equivalent definition can be given from the viewpoint of network design problems:
Find a path \(H\) with the same vertex set as~$G$ that minimizes the respective objective.
The three objectives can then be recast as \(\max_{uv\in E(G)}\dist_{H}(u,v)\), the maximum edge-cut over cuts of \(H\), and \(\sum_{uv\in E(G)}\dist_{H}(u,v)\), respectively.
From this vantage point it is natural to ask what happens if $H$ can be chosen from a richer class of host graphs?
This direction has recently gained traction (see related work).
In this paper, we focus on the linear arrangement objective and allow \(H\) to be a \emph{caterpillar}, i.e., a tree whose non-leaf vertices induce a path, called the \emph{spine} of the caterpillar, with the remaining leaves (the \emph{legs}) attached to spine vertices.
Caterpillars are a natural starting point for at least two reasons.
Structurally, they are precisely the connected graphs of pathwidth one~\cite{proskurowski1999classes}. %, and deleting all legs returns the spine.
In this sense they form the ``closest'' generalization of paths.
Algorithmically, they give any network design procedure an additional degree of freedom over paths:
each vertex can now be placed either on the spine or as a leg of some already-placed spine vertex.
% Finally, caterpillars come with a natural parameter, the maximum degree \(\Delta\) of the host, which interpolates between paths (\(\Delta=2\)) and stars \(\Delta=n-1\)

% \todo{lke: picture?}
Note that irrespective of~$H$ the cost~\(\sum_{uv\in E(G)}\dist_{H}(u,v)\) is always at least~$|E(G)|$.
Since a star (a caterpillar with one spine vertex) yields a cost of at most~$2|E(G)|$ for any~$G$, our problem is trivially 2-approximable.
Thus, having great flexibility in the choice of $H$ can lead to better approximation guarantee; this contrasts the non-trivial approximation algorithms for MLA with much higher approximation factor~\cite{feige2007improved}.
% \mkcom{Should we move this to the introduction, perhaps as a third reason why caterpillars are a natural starting point?}
% 
For a better control of the caterpillar, we extend the input by a bound~$\Delta>1$ on the maximum degree of the desired caterpillar (i.e., we look for a degree-$\Delta$ caterpillar). 
The parameter~$\Delta$ allows for a fine control on how ``MLA-like''~$H$ has to be. 
Formally, we consider the following problem: 

\OptproblemDef{\gic{} (\gicShort{})} 
{An undirected graph~$G$ and an integer~$\Delta > 1$.}
{Find a degree-$\Delta$ caterpillar~\(H\) on vertex set~$V(G)$ minimizing~$\sum_{uv\in E(G)}\dist_{H}(u,v)$.}

\paragraph{Related Work.}
Hardness of MLA on general graphs was established by Garey, Johnson, and Stockmeyer~\cite{garey1974some} via a reduction from \textsc{Max Cut}.
For MLA on trees, Goldberg and Klipker~\cite{goldberg1976algorithm} gave the first polynomial-time algorithm, with subsequent improvements by Shiloach~\cite{shiloach1979minimum} and Chung~\cite{chung1984optimal}.
Currently, MLA on an \(n\)-vertex tree can be solved in \(O(n^{\log 3/\log 2})\) time.
On the approximation side, Leighton and Rao~\cite{leighton1999multicommodity} obtained an \(O(\log^2 n)\)-approximation for MLA on general graphs, improved to \(O(\log n \log\log n)\) by Even et al.~\cite{even2000divideandconquer} via spreading metrics, then to \(O(\log n)\) by Rao and Richa~\cite{rao1998new}, and finally to \(O(\sqrt{\log n}\log\log n)\) by Feige and Lee~\cite{feige2007improved}.
Planar graphs admit an \(O(\log\log n)\)-approximation~\cite{rao1998new}, and dense graphs admit a PTAS~\cite{arora1996new}.
Exact optimal solutions are also known for several structured graph classes, including hypercubes~\cite{harper1964optimal} and square grids~\cite{mitchison1986optimal}.
See Díaz, Petit, and Serna~\cite{diaz2002survey} for further pointers.
All of these results restrict \(H\) to be a path.

The only prior work we are aware of on varying \(H\) falls into two strands.
A first line of work considers designing cycles~\cite{raspaud2000congestion} or grids~\cite{bezrukov2000congestion} under dilation/congestion measures.
Closer to our setting, a second line studies the linear arrangement cost for other host classes.
Avin, Mondal, and Schmid~\cite{avin2020demand} gave an \(O(1)\)-approximation when \(H\) is required to be a degree-\(4\) graph and \(G\) is a tree;
Kellerhals et al.~\cite{kellerhals2026designingapproximatebinarytrees} gave a \(4\)-approximation for the case where \(H\) is a binary tree and \(G\) is a tree;
Schmid et al.~\cite{schmid2015splaynet} gave a polynomial-time algorithm for the case where \(H\) is a binary search tree and \(G\) is arbitrary; and 
Liebchen and Wünsch~\cite{LW08} gave an overview on different problem variants where the task is to find a spanning tree~$H$ of~$G$ under various objectives, including linear arrangement cost.
% Johnson, Lenstra, and Rinnooy Kan~\cite{johnson1978complexity} proved \NP-hardness when \(H\) is required to be a spanning tree of \(G\), see

\paragraph{Our Contribution.}
We introduce and study \gicShort{} as a generalization of MLA.
\gicShort{} inherits its \NP-hardness from its special case MLA (that is \gicShort{} with~$\Delta = 2$).
In \cref{ssec:gicHardness}, we extend this \NP-hardness to all values~$\Delta \ge 2$.
Somewhat surprisingly, we also extend any approximation result of MLA to \gicShort{} in \cref{sec:approx}. 
More precisely, our algorithm takes any \(\alpha\)-approximate linear arrangement and rounds it into a degree-\(\Delta\) caterpillar whose cost is within an additive \(O(|E(G)|)\) term of \(\alpha\) times the optimum.
This implies that for any graph class for which MLA is polynomial-time solvable (e.g.\ if~$G$ is a tree), \gicShort{} admits a polynomial-time $4$-approximation algorithm.

We also discover some differences in the computational complexity of MLA and \gicShort{}. 
Notably, MLA is polynomial-time solvable if~$G$ is a tree while we prove in \cref{sec:tic} that \gicShort{} remains \NP-hard, even if~$G$ is a tree.
This indicates that trying to transfer polynomial-time (exact) algorithms from MLA to \gicShort{} might be a pointless endeavor. 

\paragraph{Preliminaries.}
All graphs are assumed to be undirected.
For a given graph $G$, we denote its vertices by $V(G)$ and its edges by $E(G)$. 
%Let $\Graph$ be the class of all finite graphs.
%We write $H \subseteq G$ if $H$ is a \emph{subgraph} of $G$, that is, $V(H) \subseteq V(G)$ and $E(H) \subseteq E(G)$, and we write $H \leq G$ if $H$ is an \emph{induced subgraph} of $G$, that is, $V(H) \subseteq V(G)$ and $E(H) = E(G) \cap (V(H) \times V(H))$.  For graphs $G,H \in \Graph$, we define their \emph{union} as $$G \cup H := (V(G) \cup V(H), E(G)\cup E(H)).$$ We can w.l.o.g.\ assume that $G$ and $H$ have disjoint vertex sets. Then, we call their union a \emph{disjoint union} and write $G \sqcup H := G \cup H$.
For $u,v \in V(G)$, we denote by $\dist_G(u,v)$ the length of a shortest $uv$-path if one exists and infinity otherwise.
The \emph{cost} of a graph $H$ with $V(H) = V(G)$ is
\[ \cost(G,H) := \sum_{uv \in E(G)} \dist_H(u,v). \]
We say an edge $uv \in E(G)$ is \emph{stretched} by $H$ if the stretch amount $\dist_H(u,v)-1$ is greater than $0$.
%For graph classes $\mathcal{G},\mathcal{H} \subseteq \Graph$, we study the problem of embedding graphs $G \in \mathcal{G}$ into graphs $H \in \mathcal{H}$ defined as follows.
%\problem{
%    Graph Embedding of $\mathcal{G}$ into $\mathcal{H}$ ($\GE[\mathcal{G}, \mathcal{H}$])
%}{
%    A graph $G \in \mathcal{G}$ and a number $W \in \mathbb{N}$.
%}{
%    Is there a graph $H \in \mathcal{H}$ with $V(H) = V(G)$ that satisfies $\cost(G,H) := \sum_{\{u,v\} \in E(G)} d_H(u,v) \leq W$?
%}
For a graph class $\mathcal{H}$, we define $\opt_{\mathcal{H}}(G) \coloneqq \min\{ \cost(G, H) \mid H \in \mathcal{H},\, V(G) = V(H) \}$,
and we say that $H \in \mathcal{H}$ is \emph{optimal} if $V(G) = V(H)$ and the cost of $H$ is $\opt_{\mathcal{H}}(G)$.
%More generally, if $ V(G) \neq V(H)$ but $ |V(G)| = |V(H)|$, we can still talk about embeddings of $G$ into $H$ as follows:
%We say $G$ \emph{embeds into} $H$ with costs at most $W$ if there is a bijection $\pi : V(G) \to V(H)$ such that
%\[\cost_\pi(G,H):= \sum_{uv \in E(G)} \dist_H(\pi(u),\pi(v)) \leq W.\]
%For graph classes $\mathcal{G},\mathcal{H} \subseteq \Graph$, we will also consider a variation of graph embedding where we give the graph we want to embed into as input.
%\problem{
%    Graph Embedding with Input Graph of $\mathcal{H}$ into $\mathcal{G}$ ($\GE_\mathrm{input}[\mathcal{G}, \mathcal{H}$])
%}{
%    Graphs $G \in \mathcal{G}, H \in \mathcal{H}$ with $|V(G)|=|V(H)|$ and a number $W \in \mathbb{N}$.
%}{
%    Is there a bijection $f: V(G) \to V(H)$ such that 
%    $\cost_f(G,H):= \sum_{\{u,v\} \in E(G)} d_H(f(u),f(v)) \leq W$?
%}
%Again, we can talk about optimal embeddings but now with respect to bijections from  $V(G)$ to $V(H)$. A bijection $\pi: V(G) \to V(H)$ is an \emph{optimal embedding} of $G$ into $H$ if 
%\[\pi \in \argmin \{\cost_\sigma(G,H) ~|~ \sigma : V(G) \to V(H) \text{ is bijective}\}.\]
%Moreover, the \emph{optimal cost} of embedding $G$ into $H$ is given by
%\[\opt(G,H) := \min \{\cost_\pi(G,H)~|~ \pi : V(G) \to V(H) \text{ is bijective}\}.\]

A \emph{caterpillar} is a tree $C$
whose non-leaf vertices induce a path, called the \emph{spine}.
A leaf $u$ in $C$ is called \emph{leg} of spine vertex $v$, where $v$ is the unique neighbor of $u$.
For a spine vertex $v$, the \emph{segment} $\seg_C(v)$ of $v$ is the subgraph of $C$ induced by $v$ and all legs of $v$.
We write $\seg(v)$ if $C$ is clear from the context.
A caterpillar is a \emph{degree-$\Delta$ caterpillar} if it has maximum degree at most $\Delta$. 
A caterpillar is \emph{$\Delta$-legged} if each of its segments has at most $\Delta$ legs.
Note that the only difference between restricting the degree of a caterpillar and restricting the number of legs a segment can have is whether the outermost segments are allowed to have one more leg than the inner segments or not.
We call a segment of a degree-$\Delta$ or $\Delta$-legged caterpillar \emph{full} if its spine vertex has the maximum number of legs.
A degree-$\Delta$ or $\Delta$-legged caterpillar is \emph{full} if each but one outermost segment is full.
%Let $\Cater$ be the class of all caterpillars, and let $\Cater_\Delta$ be the class of all degree-$\Delta$ caterpillars.

\section{Designing Caterpillars for Graphs}\label{sec:gic}

In this section, we first show our main result: an~$O(\sqrt{\log n}\log\log n)$-approximation for \gicShort{} which is based on the known approximation for MLA~\cite{feige2007improved}.
Our approximation works for any given degree bound~$\Delta$ for the caterpillar to be found. 
We subsequently show \NP-hardness of \gicShort{} for any given constant~$\Delta$.

\subsection{Approximation Algorithm}\label{sec:approx}

\newcommand{\flatten}{\textnormal{\textsc{Flatten}}}
\newcommand{\starfy}{\textnormal{\textsc{Starfy}}}

We next show that any \(\alpha\)-approximation for the MLA problem can be lifted to an (\(\alpha+3-2/(\Delta-1)\))-approximation for \gicShort{}.
% the problem of optimally embedding a graph into a degree-\(\Delta\) caterpillar.
Combined with the \(O(\sqrt{\log n}\log\log n)\)-approximation for MLA due to Feige and Lee~\cite{feige2007improved}, this yields an approximation of the same order for our problem.

For \(\Delta=n-1\), a star already serves as a trivial \(2\)-approximation, since every edge of \(G\) is stretched to a path of length at most \(2\) and the optimum is at least \(|E(G)|\).
Our algorithm extends this idea to arbitrary \(\Delta\ge 2\):
it builds stars of size \(\Delta-1\) and concatenates them along a spine.
The stars are dictated by a linear arrangement \(\pi\) of \(G\):
the \(i\)-th star contains those vertices placed by \(\pi\) into positions \((i-1)(\Delta-1)+1,\dots,i(\Delta-1)\).
For each star, its center is chosen to be a vertex of maximum degree in \(G\).

We make this formal as follows.
A linear arrangement of \(G\) is a bijection \(\pi\colon V(G)\to[n]\).
For a linear arrangement \(\pi\) and a vertex \(v\), let
\[
  \beta_\pi(v):=\ceil*{\frac{\pi(v)}{\Delta-1}}
\]
be the index of the star that \(v\) belongs to.

\paragraph{\starfy{} algorithm.}
Given a linear arrangement \(\pi\), proceed as follows.
\begin{enumerate}
  \item Partition \(V(G)\) into \(\ceil*{n/(\Delta-1)}\) consecutive segments \(S_i:=\{v\in V(G)\mid\beta_\pi(v)=i\}\).
  \item For each segment \(S_i\), choose a vertex \(s_i\in S_i\) of maximum degree in \(G\) and turn \(S_i\) into a star centered at \(s_i\).
  \item Connect the centers along a spine by adding the edges \(s_{i-1}s_i\) for all \(i>1\).
\end{enumerate}
% The resulting caterpillar \(H\) has maximum degree \(\Delta\): the centers \(s_i\) have at most \(\Delta-2\) leaves and at most two spine neighbors.

We analyze \starfy{} in two steps.
First, we bound the cost of the caterpillar \(H\) produced by \starfy{} in terms of the cost of the input arrangement \(\pi\) (\Cref{lem:starfy}).
Second, we bound the cost of an MLA from above by the cost of a path obtained by \emph{flattening} an optimal caterpillar (\Cref{lem:flatten}).
Combining the two bounds yields the main result (\Cref{thm:starfy}).

\begin{lemma}\label{lem:starfy}
  Let \(\pi\) be a linear arrangement of \(G\), and let \(H\) be the caterpillar produced by \starfy{} on input \(\pi\). Then
  \[
    \sum_{uv\in E(G)}\dist_{H}(u,v)
    \leq
    \sum_{uv\in E(G)}\left|\beta_\pi(u)-\beta_\pi(v)\right|
    +\left(2-\frac{2}{\Delta-1}\right)|E(G)|.
  \]
\end{lemma}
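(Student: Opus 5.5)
The plan is to compute \(\dist_H(u,v)\) exactly and then bound the additive part with an averaging argument inside each star. By construction of \starfy{}, the spine of \(H\) is the path \(s_1s_2\cdots s_k\) with \(k=\ceil*{n/(\Delta-1)}\). Every vertex of \(S_i\setminus\{s_i\}\) is a leg of \(s_i\).

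For a vertex \(w\), let \(\lambda(w)\in\{0,1\}\) be \(1\) exactly when \(w\) is not a center. The first step is the identity
\[
  \dist_H(u,v)\le \left|\beta_\pi(u)-\beta_\pi(v)\right|+\lambda(u)+\lambda(v)
\]
for every edge \(uv\in E(G)\). Going from \(u\) to its center \(s_{\beta_\pi(u)}\) costs \(\lambda(u)\). Walking along the spine to \(s_{\beta_\pi(v)}\) costs \(|\beta_\pi(u)-\beta_\pi(v)|\), and stepping down to \(v\) costs \(\lambda(v)\). This also covers edges inside one star: two legs are at distance \(2\), and a leg and its center are at distance \(1\).

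Summing the identity over \(E(G)\) and regrouping the \(\lambda\)-terms by vertex gives
\[
  \cost(G,H)\le\sum_{uv\in E(G)}\left|\beta_\pi(u)-\beta_\pi(v)\right|+\sum_{w\in V(G)}\lambda(w)\deg_G(w).
\]
It therefore remains to show \(\sum_{w \text{ not a center}}\deg_G(w)\le\bigl(1-\tfrac{1}{\Delta-1}\bigr)\sum_{w}\deg_G(w)=\bigl(2-\tfrac{2}{\Delta-1}\bigr)|E(G)|\).

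This is the only step that needs an idea, and the choice of a maximum-degree center handles it. Fix a segment \(S_i\), which has \(|S_i|\le\Delta-1\) vertices. Since \(s_i\) has maximum degree in \(S_i\), its degree is at least the average:
\[
  \deg_G(s_i)\ge\frac{1}{|S_i|}\sum_{w\in S_i}\deg_G(w)\ge\frac{1}{\Delta-1}\sum_{w\in S_i}\deg_G(w).
\]
The same bound holds for a possibly smaller last segment, because \(1/|S_i|\ge 1/(\Delta-1)\).

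Consequently, the non-centers of \(S_i\) carry total degree at most \(\bigl(1-\frac{1}{\Delta-1}\bigr)\sum_{w\in S_i}\deg_G(w)\). Summing over all segments and using the handshake lemma \(\sum_w\deg_G(w)=2|E(G)|\) gives the claimed bound. No step is a real obstacle: the distance identity is routine, and the averaging bound is immediate from the choice of centers.
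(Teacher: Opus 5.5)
Your proposal is correct and matches the paper's proof step for step. You decompose the $uv$-path into center/spine/center pieces to get $\dist_H(u,v)\le|\beta_\pi(u)-\beta_\pi(v)|+[u\neq s_{\beta_\pi(u)}]+[v\neq s_{\beta_\pi(v)}]$, regroup the indicator terms by vertex degree, and use the averaging bound $\deg_G(s_i)\ge\frac{1}{\Delta-1}\sum_{w\in S_i}\deg_G(w)$ that follows from choosing maximum-degree centers. The only cosmetic difference is that you bound the non-center degree segment by segment, whereas the paper writes that sum as $2|E(G)|-\sum_i\deg_G(s_i)$ and lower-bounds the center sum globally.
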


\begin{proof}
  Fix an edge \(uv\in E(G)\).
  In \(H\), the unique \(uv\)-path decomposes into three pieces: from \(u\) to the center \(s_{\beta_\pi(u)}\) of its segment, along the spine from \(s_{\beta_\pi(u)}\) to \(s_{\beta_\pi(v)}\), and from \(s_{\beta_\pi(v)}\) to \(v\).
  The first and third pieces have length at most \(1\), and length \(0\) precisely when \(u\) (respectively \(v\)) is the center of its segment.
  The middle piece has length \(|\beta_\pi(u)-\beta_\pi(v)|\).
  Hence
  \[
    \dist_{H}(u,v)
    \leq\left|\beta_\pi(u)-\beta_\pi(v)\right|
    +[u\neq s_{\beta_\pi(u)}]
    +[v\neq s_{\beta_\pi(v)}],
  \]
  where \([\cdot]\) denotes the Iverson bracket:
  For a proposition \(P\), \([P]=1\) if \(P\) is true and \([P]=0\) otherwise.

  Summing over all edges \(uv\in E(G)\), the contribution of the two indicator terms is
  \[
    \sum_{uv\in E(G)}[u\neq s_{\beta_\pi(u)}]+[v\neq s_{\beta_\pi(v)}]
    =\sum_{v\in V(G)}\deg_G(v)\cdot[v\neq s_{\beta_\pi(v)}]
    =2|E(G)|-\sum_{i}\deg_G(s_i).
  \]
  Since each center \(s_i\) is chosen to have maximum degree in its segment \(S_i\),
  \[
    \sum_{i}\deg_G(s_i)
    \geq\sum_{i}\frac{1}{|S_i|}\sum_{v\in S_i}\deg_G(v)
    \geq\frac{1}{\Delta-1}\sum_{v\in V(G)}\deg_G(v)
    =\frac{2|E(G)|}{\Delta-1}.
  \]
  Combining these two displays yields
  \[
    \sum_{uv\in E(G)}\dist_H(u,v)
    \leq\sum_{uv\in E(G)}\left|\beta_\pi(u)-\beta_\pi(v)\right|
    +2|E(G)|-\frac{2|E(G)|}{\Delta-1}.\qedhere
  \]
\end{proof}

To bound the cost of an MLA, we bound the cost of a specific linear arrangement obtained by linearizing an optimal caterpillar.

\paragraph{\flatten{} algorithm.}
Given a caterpillar \(H\) with spine \(s_1 \dots s_k\) (in some fixed left-to-right order), construct a path \(H'\) on \(V(H)\) as follows:
for each spine vertex \(s_i\) with \(i<k\), insert the legs of \(s_i\) immediately to the left of \(s_i\) (in arbitrary order);
insert the legs of \(s_k\) immediately to the right of \(s_k\).
The result is a path \(H'\) on all \(|V(H)|\) vertices in which the spine vertices appear in their original left-to-right order, each preceded by its legs (except \(s_k\), which is followed by its legs).

\begin{lemma}\label{lem:flatten}
  Let \(H\) be a degree-\(\Delta\) caterpillar, and let \(H'\) be the path obtained from \(H\) via \flatten{}.
  Then for every pair of vertices \(u,v\in V(H)\),
  \[
    \dist_{H'}(u,v)\leq(\Delta-1)\cdot\dist_{H}(u,v).
  \]
\end{lemma}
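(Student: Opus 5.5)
The plan is to reduce the statement to single edges of \(H\) by the triangle inequality. Let \(u=x_0,x_1,\dots,x_d=v\) be the unique \(uv\)-path in \(H\), so \(d=\dist_H(u,v)\). Then
\[
\dist_{H'}(u,v)\le\sum_{j=1}^{d}\dist_{H'}(x_{j-1},x_j).
\]
Hence it suffices to show that every edge \(xy\in E(H)\) satisfies \(\dist_{H'}(x,y)\le\Delta-1\). Since \(H'\) is a path, \(\dist_{H'}(x,y)\) equals one plus the number of vertices placed strictly between \(x\) and \(y\). So the whole proof is a count of how many legs \flatten{} inserts between the endpoints of an edge of \(H\).

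Fix the spine \(s_1\dots s_k\). Every edge of \(H\) is either a spine edge \(s_is_{i+1}\) or a leg edge \(\ell s_i\), and I would treat the two types separately.

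\emph{Spine edges.} In \(H'\), the only vertices between \(s_i\) and \(s_{i+1}\) are the legs of \(s_{i+1}\) when \(i+1<k\), and there are none when \(i+1=k\), because the legs of \(s_k\) go to its right. For \(i+1<k\) the vertex \(s_{i+1}\) is an inner spine vertex. It has two spine neighbours and degree at most \(\Delta\), so it has at most \(\Delta-2\) legs, giving \(\dist_{H'}(s_i,s_{i+1})\le\Delta-1\).

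\emph{Leg edges.} The legs of \(s_i\) occupy a contiguous block directly next to \(s_i\). A leg \(\ell\) of \(s_i\) is therefore at distance at most the number of legs of \(s_i\). For an inner spine vertex this number is at most \(\Delta-2\). For \(s_1\) or \(s_k\) with \(k\ge2\), the vertex has only one spine neighbour, so it has at most \(\Delta-1\) legs. In every case the distance is at most \(\Delta-1\).

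The step I expect to be delicate is the degenerate case \(k=1\). Here \(H\) is a star whose center may have \(\Delta\) legs. These all sit to the right of \(s_1\), so the farthest leg is at distance \(\Delta\) from the center, which breaks the edgewise bound. I would handle this by the convention that when the spine has a single vertex, one leg \(\ell^*\) is regarded as the spine endpoint \(s_2\). This is consistent with the caterpillar structure, since a single edge may be appended to the spine. Then \(s_1\) has at most \(\Delta-1\) legs, all placed to its left, and \(\ell^*=s_2\) sits immediately to its right. The two cases above then apply verbatim.

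I would also double-check that the endpoint cases use exactly the degree bound and not the \(\Delta\)-legged bound. The paper's distinction between degree-\(\Delta\) and \(\Delta\)-legged caterpillars is precisely what allows the outer segments one extra leg here. Summing the edgewise bound \(\Delta-1\) over the \(d\) edges of the \(uv\)-path then yields the lemma.
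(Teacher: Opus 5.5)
Your proof is correct, and it is organized differently from the paper's. The paper follows the whole $uv$-path $v_0\cdots v_\ell$ and asserts that these vertices appear in the same left-to-right order in $H'$. It then charges at most $\Delta-2$ legs of $v_i$ to each step. You instead use the triangle inequality in $H'$ to reduce to the per-edge bound $\dist_{H'}(x,y)\le\Delta-1$ for $xy\in E(H)$, and check spine edges and leg edges separately.

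Your edgewise reduction is the more robust route. It does not need the order-preservation claim, which is false as stated: a leg $u$ of $s_i$ with $1<i<k$ is placed left of $s_i$, so on the path $s_{i-1}s_iu$ it sits between $s_{i-1}$ and $s_i$ in $H'$. It also forces you to count the legs of $s_k$ lying to the right of $s_k$. The paper's charging misses these: when $v_{i-1}=s_k$ and $v_i$ is a leg of $s_k$, it counts nothing in between.

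That missed case is exactly your $k=1$ case, and it refutes the lemma itself, not just your edgewise bound. For $H=K_{1,\Delta}$, \flatten{} as defined outputs $s_1\ell_1\cdots\ell_\Delta$, so $\dist_{H'}(s_1,\ell_\Delta)=\Delta>(\Delta-1)\cdot\dist_H(s_1,\ell_\Delta)$. This already happens for $\Delta=2$: the three-vertex path with middle vertex $b$ has spine $b$, and \flatten{} puts both legs to the right of $b$.

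The paper defines the spine as exactly the set of non-leaf vertices. So declaring a leg to be $s_2$ is not a convention compatible with those definitions. It is a modification of \flatten{}: place $\Delta-1$ legs left of the center and one to its right. You should state it that way. With that modification your proof is complete. The change is harmless for \Cref{thm:starfy}, whose proof only uses some linear arrangement $\pi'$ with $|\pi'(u)-\pi'(v)|\le(\Delta-1)\dist_{H^*}(u,v)$ for all $u,v$.
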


\begin{proof}
  Let \(P=v_0v_1\cdots v_\ell\) be the unique \(uv\)-path in \(H\), with \(v_0=u\) and \(v_\ell=v\), oriented from left to right with respect to the orientation fixed by \flatten{}.
  (If \(u\) and \(v\) land in the same spine segment, orient arbitrarily.)
  After \flatten{}, the vertices \(v_0,v_1,\dots,v_\ell\) appear in this same left-to-right order along \(H'\), possibly with additional legs of \(v_1,\dots,v_\ell\) interspersed between consecutive vertices of \(P\).
  For each \(i\in[\ell]\), if \(v_i\) is an interior spine vertex, it has at most \(\Delta-2\) legs (since it has two spine neighbors); if \(v_i\) is the rightmost spine vertex, \flatten{} places its legs to its right, so none are inserted before \(v_i\); and if \(v_i\) is a leg of \(H\), it has no legs of its own.
  In every case, at most \(\Delta-2\) vertices are inserted between \(v_{i-1}\) and \(v_i\).
  (The legs of \(v_0 = u\) are placed to the left of \(v_0\) by \flatten{} and therefore do not appear between \(v_0\) and \(v_\ell\).)

  Summing over all \(\ell\) steps,
  \[
    \dist_{H'}(u,v)
    \leq \ell + (\Delta-2)\,\ell
    = (\Delta-1)\cdot\dist_H(u,v). \qedhere
  \]
\end{proof}

We now combine the two lemmas.

\begin{theorem}\label{thm:starfy}
  Let \(H^{*}\) be an optimal degree-\(\Delta\) caterpillar for \(G\), let \(\pi\) be an \(\alpha\)-approximation for MLA on \(G\), and let \(H\) be the caterpillar produced by \starfy{} on input~\(\pi\). Then
  \[
    \sum_{uv\in E(G)}\dist_{H}(u,v)
    \leq\alpha\sum_{uv\in E(G)}\dist_{H^{*}}(u,v)+\left(3-\frac{2}{\Delta-1}\right)|E(G)|.
  \]
\end{theorem}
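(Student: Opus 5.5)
Start from \Cref{lem:starfy}, which already gives
\[
\cost(G,H)\le \sum_{uv\in E(G)}|\beta_\pi(u)-\beta_\pi(v)| + \Bigl(2-\tfrac{2}{\Delta-1}\Bigr)|E(G)|.
\]
So it suffices to show
\[
\sum_{uv\in E(G)}|\beta_\pi(u)-\beta_\pi(v)|\le \alpha\cdot\cost(G,H^*)+|E(G)|.
\]

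First, bound the block distance by the linear distance. For positions \(p,q\) we have \(|\lceil p/(\Delta-1)\rceil-\lceil q/(\Delta-1)\rceil|\le |p-q|/(\Delta-1)+1\). In fact, when \(p\ne q\) the sharper bound \(\lceil |p-q|/(\Delta-1)\rceil\) holds, which is at most \((|p-q|-1)/(\Delta-1)+1\). Summing over the edges gives
\[
\sum_{uv}|\beta_\pi(u)-\beta_\pi(v)|\le \frac{1}{\Delta-1}\sum_{uv}|\pi(u)-\pi(v)| + |E(G)|.
\]

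Second, bound the cost of \(\pi\). Since \(\pi\) is an \(\alpha\)-approximation, \(\sum_{uv}|\pi(u)-\pi(v)|\le\alpha\cdot\opt_{\mathrm{MLA}}(G)\). Now apply \flatten{} to \(H^*\) to obtain a path \(H'\), which is itself a linear arrangement. By \Cref{lem:flatten},
\[
\opt_{\mathrm{MLA}}(G)\le \cost(G,H')\le(\Delta-1)\cost(G,H^*).
\]
Chaining the two steps, the factor \(\Delta-1\) cancels, and we get
\[
\sum_{uv}|\beta_\pi(u)-\beta_\pi(v)|\le \alpha\cdot\cost(G,H^*)+|E(G)|.
\]
Adding the additive term from \Cref{lem:starfy} yields exactly \(\alpha\cdot\cost(G,H^*)+(3-2/(\Delta-1))|E(G)|\).

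The only delicate point is the rounding inequality for ceilings. Write \(p=a(\Delta-1)+r\) with \(r\in\{1,\dots,\Delta-1\}\), and similarly for \(q\). The block indices then differ by at most \(\lfloor (|p-q|-1)/(\Delta-1)\rfloor+1\), which suffices.

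If \(\Delta=2\), everything is exact. In that case \(\beta_\pi=\pi\), the additive term in \Cref{lem:starfy} is \(0\), and the rounding bound still holds trivially.
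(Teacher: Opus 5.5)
Your proposal is correct and follows essentially the same route as the paper: Lemma~\ref{lem:starfy} plus the rounding bound $|\beta_\pi(u)-\beta_\pi(v)|\le |\pi(u)-\pi(v)|/(\Delta-1)+1$, then the $\alpha$-approximation guarantee, then flattening $H^*$ via Lemma~\ref{lem:flatten} so that the factor $\Delta-1$ cancels. Your justification of the ceiling inequality, which the paper states without proof, is also correct.
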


\begin{proof}
  By \Cref{lem:starfy} and the bound \(|\beta_\pi(u)-\beta_\pi(v)|\leq |\pi(u)-\pi(v)|/(\Delta-1)+1\),
  \begin{align*}
    \sum_{uv\in E(G)}\dist_{H}(u,v)
    &\leq\sum_{uv\in E(G)}\left(\frac{|\pi(u)-\pi(v)|}{\Delta-1}+1\right)+\left(2-\frac{2}{\Delta-1}\right)|E(G)|\\
    &=\frac{1}{\Delta-1}\sum_{uv\in E(G)}|\pi(u)-\pi(v)|
    +\left(3-\frac{2}{\Delta-1}\right)|E(G)|.
  \end{align*}
  Let \(\pi^{*}\) be an MLA of \(G\). Since \(\pi\) is an \(\alpha\)-approximation,
  \[
    \sum_{uv\in E(G)}\dist_{H}(u,v)
    \leq\frac{\alpha}{\Delta-1}\sum_{uv\in E(G)}|\pi^{*}(u)-\pi^{*}(v)|+\left(3-\frac{2}{\Delta-1}\right)|E(G)|.
  \]
  Let \((H^{*})'\) be the path obtained from \(H^*\) via \flatten{}, and let \(\pi'\) be its induced linear arrangement. Since \(\pi^{*}\) is optimal,
  \[
    \sum_{uv\in E(G)}|\pi^{*}(u)-\pi^{*}(v)|
    \leq\sum_{uv\in E(G)}|\pi'(u)-\pi'(v)|
    =\sum_{uv\in E(G)}\dist_{(H^{*})'}(u,v).
  \]
  By \Cref{lem:flatten}, \(\dist_{(H^{*})'}(u,v)\leq(\Delta-1)\dist_{H^{*}}(u,v)\) for every \(uv\in E(G)\), hence
  \begin{align*}
    \sum_{uv\in E(G)}\dist_{H}(u,v)
    &\leq\frac{\alpha}{\Delta-1}\sum_{uv\in E(G)}(\Delta-1)\dist_{H^{*}}(u,v)+\left(3-\frac{2}{\Delta-1}\right)|E(G)|\\
    &=\alpha\sum_{uv\in E(G)}\dist_{H^{*}}(u,v)
    +\left(3-\frac{2}{\Delta-1}\right)|E(G)|.\qedhere
  \end{align*}
\end{proof}

Using the trivial lower bound \(\sum_{uv\in E(G)}\dist_{H^{*}}(u,v)\geq|E(G)|\) in \Cref{thm:starfy} yields
\[
  \sum_{uv\in E(G)}\dist_{H}(u,v)
  \leq\left(\alpha+3-\frac{2}{\Delta-1}\right)\sum_{uv\in E(G)}\dist_{H^{*}}(u,v),
\]
which is what we invoke in \Cref{corollary:starfy,corollary:factor-4-approx}.
The bound of \Cref{thm:starfy} however degrades gracefully with the optimum:
when \(\sum_{uv\in E(G)}\dist_{H^{*}}(u,v)=\omega(|E(G)|)\), the additive term is \(o(1)\), and \starfy{} is an \((\alpha+o(1))\)-approximation.

The additive slack also vanishes at the extreme value \(\Delta=2\):
here \starfy{} returns the input linear arrangement unchanged.
Since \(\Delta-1=1\) divides \(n\), the ceiling terms in the proof of \Cref{lem:starfy} contribute no rounding, so the approximation factor reduces to \(\alpha\).

% The additive \(\Theta(|E(G)|)\) slack is unavoidable for our analysis, however, as the other extreme \(\Delta=n-1\) witnesses.
% Here \starfy{} produces a single star:
% it creates two segments \(S_1,S_2\), but \(S_2\) consists of a single vertex whose only neighbor is the spine vertex \(s_1\) of \(S_1\), and so it is absorbed as a leg of \(s_1\).
% Consequently \(\beta_\pi(u)=\beta_\pi(v)\) for every edge \(uv \in E(G)\), and \Cref{lem:starfy} yields an approximation factor of \(2-2/(n-2)\), independent of \(\alpha\).

Combining \Cref{thm:starfy} with the best known approximation for MLA~\cite{feige2007improved} yields the following.

\begin{corollary}\label{corollary:starfy}
  For every \(\Delta\ge 2\), there is an \(O(\sqrt{\log n}\log\log n)\)-approximation algorithm for \gic{}.
  % the problem of optimally embedding a graph into a degree-\(\Delta\) caterpillar.
\end{corollary}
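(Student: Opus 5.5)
The plan is to instantiate \Cref{thm:starfy} with the Feige--Lee algorithm~\cite{feige2007improved} as the MLA subroutine. That algorithm computes, in polynomial time, a linear arrangement \(\pi\) of \(G\) whose cost is at most \(\alpha\) times the MLA optimum, with \(\alpha=O(\sqrt{\log n}\log\log n)\). We feed \(\pi\) into \starfy{}, which runs in polynomial time: it partitions \(V(G)\) according to \(\beta_\pi\), picks a maximum-degree vertex in each part, and adds star edges and spine edges.

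Next we check that the output \(H\) is feasible, i.e., a degree-\(\Delta\) caterpillar on \(V(G)\). Each part \(S_i\) has at most \(\Delta-1\) vertices, so its center \(s_i\) has at most \(\Delta-2\) legs. It also has at most two spine neighbours, so its degree is at most \(\Delta\). Since the non-leaf vertices are exactly (a subset of) the centers joined in a path, \(H\) is a caterpillar. For \(\Delta=2\) every part is a singleton and \(H\) is simply the path given by \(\pi\).

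For the approximation guarantee, we combine \Cref{thm:starfy} with the trivial lower bound \(\sum_{uv\in E(G)}\dist_{H^*}(u,v)\ge |E(G)|\), which holds because every edge has distance at least \(1\) in any host graph on \(V(G)\). This gives the ratio \(\alpha+3-2/(\Delta-1)\le \alpha+3\). We then use that \(\alpha\ge 1\), so \(\alpha+3\le 4\alpha=O(\sqrt{\log n}\log\log n)\).

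The only potentially delicate point is the edge case \(E(G)=\emptyset\). There both costs are \(0\) and any caterpillar is optimal, so the ratio statement holds trivially. No step is a real obstacle; the main thing to be careful about is verifying the degree bound of the \starfy{} output, which the theorem statement itself does not assert.
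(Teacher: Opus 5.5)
Your proposal is correct and follows the paper's proof: run the Feige--Lee MLA approximation, feed the resulting arrangement to \starfy{}, and combine \Cref{thm:starfy} with the lower bound $\sum_{uv\in E(G)}\dist_{H^*}(u,v)\ge|E(G)|$. You also check that the \starfy{} output is a degree-$\Delta$ caterpillar and spell out $\alpha+3\le 4\alpha$; the paper leaves both implicit, and both are right.
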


\begin{proof}
  Run the \(O(\sqrt{\log n}\log\log n)\)-approximation algorithm for MLA of Feige and Lee~\cite{feige2007improved} to obtain a linear arrangement \(\pi\), and apply \starfy{} to \(\pi\).
  The claim follows from \Cref{thm:starfy} together with the lower bound \(\sum_{uv\in E(G)}\dist_{H^{*}}(u,v)\geq|E(G)|\).
\end{proof}

Plugging in known (approximation) algorithms for MLA on restricted graph classes, we obtain the following specific ratios.

\begin{corollary}\label{corollary:factor-4-approx}
  \starfy{} yields
  a polynomial-time $\alpha$-approximation algorithm for \gicShort{}
  with 
  \begin{itemize}
    \item \(\alpha = 4-2/(\Delta-1)\), when \(G\) is a tree, a hypercube, or a square grid; and
    \item \(\alpha = O(\log\log n)\), when \(G\) is planar.
  \end{itemize}
\end{corollary}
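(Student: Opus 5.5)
This corollary follows by plugging exact or approximate MLA algorithms for these classes into \Cref{thm:starfy}, then using the trivial lower bound $\sum_{uv\in E(G)}\dist_{H^*}(u,v)\ge|E(G)|$ as in the derivation after that theorem. For every input we first compute a linear arrangement $\pi$ with a known polynomial-time algorithm. We then run \starfy{} on $\pi$, which clearly takes polynomial time, and conclude that the output $H$ satisfies
\[
\sum_{uv\in E(G)}\dist_H(u,v)\le\Bigl(\alpha_{\mathrm{MLA}}+3-\tfrac{2}{\Delta-1}\Bigr)\opt,
\]
where $\opt$ is the optimum of \gicShort{} on $G$ and $\alpha_{\mathrm{MLA}}$ is the ratio of the MLA algorithm used.

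For the first item, we take $\alpha_{\mathrm{MLA}}=1$ and use an exact MLA algorithm. For trees this is Chung's algorithm~\cite{chung1984optimal}, which runs in $O(n^{\log 3/\log 2})$ time. For hypercubes the optimal arrangement is Harper's explicit one~\cite{harper1964optimal}, which can be written down directly. For square grids the optimal arrangement is the explicit construction of Mitchison and Durbin~\cite{mitchison1986optimal}. Substituting $\alpha_{\mathrm{MLA}}=1$ into the bound gives the ratio $1+3-2/(\Delta-1)=4-2/(\Delta-1)$.

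For the second item, we use the $O(\log\log n)$-approximation of Rao and Richa for planar graphs~\cite{rao1998new}. Substituting gives the ratio $O(\log\log n)+3-2/(\Delta-1)=O(\log\log n)$, since the additive term is at most $3$.

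The only step that needs care is checking that the cited exact results give an optimal arrangement that can actually be computed in polynomial time, not just the optimal cost. For hypercubes and grids this holds because the optimal arrangements are explicit orderings: lexicographic/binary order for hypercubes, and the known explicit construction for grids. \Cref{thm:starfy} holds for any $\alpha$-approximate $\pi$, so no further argument is needed. For $\Delta=2$ the bound becomes $\alpha_{\mathrm{MLA}}+1$, which is still valid, and the remark after \Cref{thm:starfy} even sharpens it to $\alpha_{\mathrm{MLA}}$.
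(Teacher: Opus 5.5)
Your proposal is correct and follows the paper's proof: you plug an exact MLA algorithm (for trees, hypercubes, and square grids) or the Rao--Richa $O(\log\log n)$-approximation (for planar graphs) into \Cref{thm:starfy}, then apply the lower bound $\sum_{uv\in E(G)}\dist_{H^*}(u,v)\ge|E(G)|$. The paper leaves implicit two points you add, and both are fine: that the explicit hypercube and grid orderings can be computed in polynomial time, and how the bound behaves when $\Delta=2$.
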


\begin{proof}
  MLA is polynomial-time solvable for trees~\cite{chung1984optimal}, hypercubes~\cite{harper1964optimal}, and square grids~\cite{mitchison1986optimal}
  and admits a polynomial-time~\(O(\log \log n)\)-approximation for planar graphs~\cite{rao1998new}.
  Invoking \Cref{thm:starfy} with the lower bound \(\sum_{uv\in E(G)}\dist_{H^{*}}(u,v)\geq|E(G)|\) yields the claimed approximation algorithm for \gicShort{}.
\end{proof}

% \section{Hardness Results}\label{sec:hardness}

\subsection{NP-hardness} \label{ssec:gicHardness}
% Does a 1- or 2-vertex caterpillar not have any central vertex?
In this section, we show that the problem of designing optimally degree-$\Delta$ caterpillars for arbitrary graphs is $\NP$-hard for every constant $\Delta \ge 2$. 
For $\Delta < 2$, this problem becomes trivial. 
We generalize the $\NP$-hardness proof of Garey et al.~\cite{garey1974some} for \textsc{Minimum Linear Arrangement} via a reduction from \textsc{Max Cut}. 
% They show that for $\Delta =2$ the problem of optimally embedding graphs into paths is $\NP$-hard 
In \textsc{Max Cut} we are given a graph $G=(V,E)$ and a number~$k \in \NN$ and the task is to find a partition $S_1 \cup S_2 = V$ such that there are at least $k$ edges from $S_1$ to $S_2$.% We extend their results to all $\Delta \geq 2$ by generalizing their reduction.

We first sketch the \NP-hardness proof of \textsc{Minimum Linear Arrangement} (MLA) from Garey et al.~\cite{garey1974some}.
Recall that, given a graph $G' = (V', E')$ and $W \in \mathbb N$, MLA asks whether there is an embedding $\pi \colon V(G') \to V(H)$ of cost at most~$W$, where $H$ is a path on $|V(G)|$ vertices.
Given an instance $(G=(V,E),k)$ of \textsc{Max Cut}, we construct the graph $G'=(V',E')$ with $W \in \mathbb{N}$ as follows. 
The graph $G'$ is the complement graph of $G$ together with~$r=n^4$ many vertices $\{u_1,\dots,u_r\} = U$ that are connected to all other vertices of $G'$. Set $W=\opt(K_{r+n}) - kr$ where $\opt(K_{r+n})$ denotes the cost of an optimal embedding of $K_{r+n}$ into a path. 
The idea of the reduction is the following. To minimize the cost of an embedding we want to find non-edges in $E'$, that is edges in $E$, and place the endpoints as far as possible from each other on the path. This will correspond to a partition $S_1 \cup S_2 = V$. 
The additional $n^4$ vertices are required to outweigh any cost contribution from edges within a partition so that an optimal embedding is as shown in \Cref{fig:minla_path}.
More formally, if $S_1,S_2$ is a partition of $V$ such that $|\{uv \in E \mid u \in S_1, v \in S_2\}| \ge k$, then we can embed $G'$ into a path $P$ with cost at most $W$ by placing $S_1$ (in arbitrary order) leftmost in $P$, $S_2$ (in arbitrary order) rightmost in $P$ and $U$ in between of $S_1$ and $S_2$. Conversely, if there is an embedding of $G'$ into a path $P$ with cost at most $W$ then we can show that there is another embedding of $G'$ into a path $P'$ such that no $v \in V$ has vertices $u,u' \in U$ both to the left and right of it by swapping certain vertices. This allows us to form a partition $S_1,S_2$. Then, by elementary computations and the choice of $r$ we can show that $|\{uv \in E \mid u \in S_1, v \in S_2\}| \ge k$.

We generalize the reduction as follows. Instead of swapping vertices we swap whole segments. Additionally, we need to impose an upper bound on the maximum distance of two vertices in the caterpillar. Without this constraint, the bound on the size of the cut is not large enough.

\begin{figure}[t]
    \centering
    \begin{tikzpicture}[
        clique/.style={draw,fill=lightgray,inner sep=2pt},
        vertexs1/.style={circle,draw,inner sep=1.5pt},
        vertexs2/.style={circle,draw,fill=black,inner sep=1.5pt}
    ]
    
    % central nodes
    \node[vertexs1] (v0) at (-1,0) {};
    \node[vertexs1] (v1) at (0,0) {};
    \node[vertexs1] (v2) at (1,0) {};
    \node[clique] (u1) at (2,0) {};
    \node (u2) at (3,0) {$\dots$};
    \node[clique] (u3) at (4,0) {};
    \node[vertexs2] (v3) at (5,0) {};
    \node[vertexs2] (v4) at (6,0) {};
    \node[vertexs2] (v5) at (7,0) {};

    % edges
    \draw (v0)--(v1)--(v2)--(u1)--(u2)--(u3)--(v3)--(v4)--(v5);

    \draw [
            decoration={brace,raise=0.3cm},decorate
        ] (-1,0) -- (1,0) 
        node [pos=0.5,anchor=south,yshift=0.4cm] {$S_1$};
    \draw [
            decoration={brace,raise=0.3cm},decorate
        ] (2,0) -- (4,0) 
        node [pos=0.5,anchor=south,yshift=0.4cm] {$U$};
    \draw [
            decoration={brace,raise=0.3cm},decorate
        ] (5,0) -- (7,0) 
        node [pos=0.5,anchor=south,yshift=0.4cm] {$S_2$};

    \end{tikzpicture}
    \caption{
    A linear arrangement witnessing that $G'=(V',E')$ and $W$ is a YES-instance for $\textsc{MLA}$. Square vertices denote vertices from $U$ and circular vertices denote vertices from $V$. White vertices belong to $S_1$ and black vertices to $S_2$. The distance between vertices from $S_1$ to $S_2$ is at least $|U| = r$.}
    \label{fig:minla_path}
\end{figure}

%\problem{Max Cut ($\MC$)}
%{A graph $G=(V,E)$ and a number $k \in \mathbb{N}$.}
%{Is there a partition of the vertices $S_1 \cup S_2 = V$ with cut at least $k$, that is,  $|\{\{u,v\} \in E \mid u \in S_1, v \in S_2\}| \ge k$?}

%\begin{theorem}[\cite{karp2009reducibility}]
    %$\MC$ is \NP-complete.
%\end{theorem}

%\problem{
    %Graph Embedding into Degree-$\Delta$ Caterpillars 
    %($\GEC_\Delta$)
%}{
    %A graph $G=(V,E)$ and a number $W \in \mathbb{N}$.
%}{
    %Is there a degree-$\Delta$ caterpillar $C=(V,E')$ such that $$\cost(G,C) = \sum_{\{u,v\} \in E}d_C(u,v) \le W?$$
    %\vspace{-8pt}
%}

\begin{theorem}% [\appref{thm:gec-NP-hardness}]\footnote{Proofs of statements marked with \appsymb{} are deferred to the appendix.} 
	\gic{} remains $\NP$-hard for each fixed~$\Delta \geq 2$.
	\label{thm:gec-NP-hardness}
%     For all $\Delta \geq 2$, $\GEC_\Delta$ is $\NP$-hard.
\end{theorem}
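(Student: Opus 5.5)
We reduce from \textsc{Max Cut}, generalizing the construction of Garey et al.\ sketched above. Given $(G=(V,E),k)$ with $|V|=n$, we build $G'$ as the complement of $G$ together with a set $U$ of $r$ universal vertices ($r$ a sufficiently large polynomial in $n$, say $r=n^4$). We choose $r$ so that $r+n$ is compatible with full degree-$\Delta$ caterpillars; this can be arranged by adding at most $\Delta$ further universal vertices. Since $\Delta$ is a constant, the optimal cost of embedding the clique $K_{r+n}$ into a degree-$\Delta$ caterpillar, call it $\opt_\Delta(K_{r+n})$, is computable in polynomial time. Indeed, a full caterpillar with a balanced layout is optimal for the clique, so its cost is an explicit sum over pairs of segments. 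We then set $W=\opt_\Delta(K_{r+n})-c\cdot k$, where $c$ is the guaranteed minimal distance between vertices placed on opposite sides of the $U$-block. Roughly, $c\approx r/(\Delta-1)$, the number of spine steps needed to traverse $U$.

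For the forward direction, take a cut $(S_1,S_2)$ with at least $k$ crossing edges. We place $S_1$ into the leftmost segments, then $U$, then $S_2$ into the rightmost segments, all on a full caterpillar. Each vertex pair then contributes exactly as it would in the clique embedding, except that pairs corresponding to edges of $G$ are missing. Each crossing edge saves at least $c$, because its endpoints are separated by all segments of $U$. Edges of $G$ inside $S_1$ or $S_2$ also save something, which only helps. Hence the cost is at most $\opt_\Delta(K_{r+n})-ck$.

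For the backward direction, take a degree-$\Delta$ caterpillar $C$ of cost at most $W$. The cost equals the clique cost of $C$ minus $\sum_{uv\in E}\dist_C(u,v)$. We proceed in three steps.
\begin{enumerate}
\item We show that $C$ is essentially full with a diameter close to optimal. Otherwise the clique cost exceeds $\opt_\Delta(K_{r+n})$ by $\Omega(r)$ per defect, which outweighs the total possible savings. This is where the needed upper bound on the maximum distance comes from: every $\dist_C(u,v)\le D\approx (r+n)/(\Delta-1)+2$.
\item We normalize $C$ by swapping whole segments, and vertices between segments, so that no $V$-vertex has $U$-vertices on both sides. Exchanging a $V$-vertex with a $U$-vertex never increases the clique part. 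We argue, as in Garey et al.\ but at segment granularity, that these exchanges only increase the saving term, or decrease it by an amount we can account for.
\item After normalization, $V$ splits into a left part $S_1$ and a right part $S_2$. A cut edge saves at most $D$, and a non-cut edge saves at most $O(n/(\Delta-1)+2)$. Thus the total saving is at most $|\mathrm{cut}|\cdot D+|E|\cdot O(n)$, and the clique cost of $C$ is at least $\opt_\Delta(K_{r+n})$. Using $D-c=O(n)$ and choosing $r\gg n^3$, we obtain $|\mathrm{cut}|\cdot c\ge ck-O(n^3)$. Since $c\gg n^3$, this gives $|\mathrm{cut}|\ge k$.
\end{enumerate}

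The main obstacle is the backward direction, specifically steps 1 and 2. We must show that the optimal clique embeddings into degree-$\Delta$ caterpillars are the full ones with controlled diameter, and that deviating costs more than any saving. We must also make the segment-swapping argument rigorous so that the $V$-vertices end up at the two ends. For step 1 we would first try an exchange argument: moving a leg from a non-full segment to extend or fill another segment reduces the total pairwise distance by order $r$. This gives both fullness and the diameter bound.
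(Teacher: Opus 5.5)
Your reduction and forward direction match the paper's: complement of $G$ plus $r$ universal vertices, and $W=\opt_\Delta(K_{n+r})-ck$ with $c=\ell\approx r/(\Delta-1)$. The backward direction, however, has a genuine gap exactly at the steps you flag, and the justification you give for Step~1 does not work. Even granting that each defect raises the clique part by $\Omega(r)$, the savings $\sum_{uv\in E}\dist_C(u,v)$ can be of order $n^2r/(\Delta-1)$. So can the slack $W-\min_C\cost(G',C)\approx(\mathrm{mc}(G)-k)\,\ell$, whenever the maximum cut $\mathrm{mc}(G)$ exceeds $k$ by $\Theta(n^2)$. Hence ``$\Omega(r)$ per defect outweighs the total possible savings'' is false, and a caterpillar of cost at most $W$ need \emph{not} be essentially full. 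For example, with $\Delta=3$, turning $x$ end segments into a bare path adds $x$ spine steps but raises the clique part only by about $(n+r)x^2$, which that slack absorbs for $x$ of order $n$. Comparing $\Omega(r)$ per extra spine step with a marginal saving gain of at most $|E|$ per step bounds the extra spine length only by $O(n^2)$. That gives $D-c=O(n^2)$, not the $O(n)$ you use in Step~3, and the error term there becomes $O(n^4)$, which $r\gg n^3$ does not absorb. Recovering $O(n)$ this way would need a quadratic lower bound on the clique excess in terms of the extra spine length. You neither state nor prove it, and even the linear per-defect bound is left open.

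The missing idea is that you never need the given caterpillar to be nearly full. All vertices of $U$ are universal, hence twins in $G'$. So every maximal run of consecutive segments consisting only of $U$-vertices can be rearranged into a full block without increasing the cost: shift legs toward the side with more outside vertices, and use that optimal caterpillars for a clique are full. Only $n$ segments contain $V$-vertices, so there are at most $n+1$ such runs, each with at most one non-full segment. Counting $U$-vertices then shows at most $\ell+n+3$ pure-$U$ segments, so all distances are at most $\ell+2n+5=c+O(n)$.

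Step~2 is then done entirely on the savings side. Since the clique part is always at least $\opt_\Delta(K_{n+r})$, the given caterpillar has savings at least $k\ell$. Restrict to caterpillars with at most $\ell+n+3$ pure-$U$ segments; this class is closed under whole-segment swaps, which also keep the diameter bound. Take one maximizing the savings and, subject to that, minimizing the number of $V$-vertices lying strictly between two pure-$U$ segments. Let $i_0$ be the segment of the rightmost such vertex, and let $L(i),R(i)$ count the $E$-edges leaving segment $i$ to the left and right. Either $L(i_0)\ge R(i_0)$, and swapping segment $i_0$ with the rightmost pure-$U$ segment does not decrease the savings but reduces the count. Or there is $t<i_0$ with $L(t)\ge R(t)$ and $L(t+1)<R(t+1)$, and swapping segments $t$ and $t+1$ strictly increases the savings.

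This target is weaker than your ``no $V$-vertex has $U$-vertices on both sides''. Your version cannot be reached by whole-segment swaps once segments mix $U$- and $V$-vertices, and vertex-level exchanges between spine and leg positions do not change the savings monotonically. The weaker form suffices because the at most $n$ segments containing $V$-vertices only contribute to the $O(n)$ error.
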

% \appendixproof{thm:gec-NP-hardness}
% {
\begin{proof}
    We reduce from \textsc{Max Cut}.
    
    \textbf{Reduction.}
    On input $G=(V,E)$ and $k \in \NN$, the reduction does the following. Let 
    \begin{align*}
    \ell &:= \binom{n}{2} \cdot (4n+7) + 1, \\
    %n'&:=-n ~\;\mathrm{mod}\, (\Delta-1) ~ \in \{0,\dots,\Delta-2\}, \\
    r&:=\ell (\Delta-1)+2.%+n'.
    \end{align*}
    We will come back to the semantics of these numbers later.
    Let $U = \{u_1,\dots,u_r\}$  such that $U \cap V = \varnothing$. The output of the reduction is $(G'=(V',E'),W)$ with
    \begin{align*}
        V'&:=V \cup U, \\
        E'&:=\{uv \mid u,v \in V' \text{, } u \neq v \text{ and } uv \notin E\}, \\
        W&:=\opt_{\Cater_\Delta}(K_{n+r})-k\ell,
    \end{align*}
    where $\Cater_\Delta$ denotes the class of all degree-$\Delta$ caterpillars. %Note that by the choice of $r$ and $n'$ $\Delta-1$ divides $|V'|-2$.
    We call $u \in U$ a \emph{clique vertex}.
    
    \textbf{Correctness.} Let $G=(V,E)$ and $k$ be a YES-instance for \textsc{Max Cut}. Then there is a partition $S_1 \cup S_2 = V$ with $|\{uv \in E \mid u \in S_1, v \in S_2\}| \ge k$. Arrange the vertices $V'$ in a full caterpillar $C$ as follows. The vertices of $S_1$ are placed leftmost in $C$, the vertices of $S_2$ rightmost and the vertices of $U$ in between. Whenever a segment contains vertices from $S_1$ or $S_2$ and $U$, we choose a clique vertex as the spine vertex of that segment. %As $|V'| - 2$ is divisible by $\Delta -1$, all segments of $C$ are full so that the choice of putting $S_1$ to the left and $S_2$ to the right of the vertices of $U$ was arbitrary. 
    \Cref{fig:fullcat} shows the construction.

\begin{figure}
        \centering
        \begin{tikzpicture}[
            clique/.style={draw,fill=lightgray,inner sep=2pt},
            vertexs1/.style={circle,draw,inner sep=1.5pt},
            vertexs2/.style={circle,draw,fill=black,inner sep=1.5pt}
        ]
        
        % central nodes
        \node[vertexs1] (v1) at (0,0) {};
        \node[clique] (v2) at (1,0) {};
        \node[clique] (u1) at (2,0) {};
        \node (u2) at (3,0) {$\dots$};
        \node[clique] (u3) at (4,0) {};
        \node[clique] (v3) at (5,0) {};
        \node[vertexs2] (v4) at (6,0) {};
        
        % leaves
        \node[vertexs1] (v1a) at (-0.2,-0.7) {};
        \node[vertexs1] (v1b) at (0,-0.7) {};
        \node[vertexs1] (v1c) at (0.2,-0.7) {};
        \node[vertexs1] (v2a) at (0.9,-0.7) {};
        \node[clique] (v2b) at (1.1,-0.7) {};
        \node[clique] (u1a) at (1.9,-0.7) {};
        \node[clique] (u1b) at (2.1,-0.7) {};
       % \node[clique] (u2a) at (2.9,-0.7) {};
        %\node[clique] (u2b) at (3.1,-0.7) {};
        \node[clique] (u3a) at (3.9,-0.7) {};
        \node[clique] (u3b) at (4.1,-0.7) {};
        \node[vertexs2] (v3a) at (4.9,-0.7) {};
        \node[vertexs2] (v3b) at (5.1,-0.7) {};
        \node[vertexs2] (v4a) at (5.8,-0.7) {};
        \node[vertexs2] (v4b) at (6,-0.7) {};
        \node[vertexs2] (v4c) at (6.2,-0.7) {};

        % edges
        \draw (v1)--(v2)--(u1)--(u2)--(u3)--(v3)--(v4);
        
        \draw (v1)--(v1a);
        \draw (v1)--(v1b);
        \draw (v1)--(v1c);

        \draw (v2)--(v2a);
        \draw (v2)--(v2b);

        \draw (u1)--(u1a);
        \draw (u1)--(u1b);

        %\draw (u2)--(u2a);
        %\draw (u2)--(u2b);

        \draw (u3)--(u3a);
        \draw (u3)--(u3b);
        
        \draw (v3)--(v3a);
        \draw (v3)--(v3b);
        
        \draw (v4)--(v4a);
        \draw (v4)--(v4b);
        \draw (v4)--(v4c);

        \draw [
                decoration={brace,raise=0.3cm},decorate
            ] (1,0) -- (5,0) 
            node [pos=0.5,anchor=south,yshift=0.4cm] {length $\ell$};
        
        \end{tikzpicture}
        \caption{A full degree-$\Delta$ caterpillar witnessing that $G'=(V',E')$ and $W$ is a YES-instance for $\gicShort$. Square vertices denote clique vertices and circular vertices denote vertices from $V$. White vertices belong to $S_1$ and black vertices to $S_2$.}
        \label{fig:fullcat}
    \end{figure}

    Then,
    \begin{align*}
        \cost(G',C) = \sum_{uv \in E'}\dist_C(u,v) &= \opt_{\Cater_\Delta}(K_{n+r}) - \sum_{uv \notin E'}\dist_C(u,v) \\
        &= \opt_{\Cater_\Delta}(K_{n+r}) - \sum_{uv \in E}\dist_C(u,v) \\ &\le \opt_{\Cater_\Delta}(K_{n+r})-k\ell = W.
    \end{align*}
    The second equality holds as $C$ is a full caterpillar. The last inequality holds as $|\{uv \in E \mid u \in S_1, v \in S_2\}| \ge k$ and $\dist_C(u,v) \ge \ell$ for $u \in S_1, v \in S_2$.

    Let $G'=(V',E')$ and $W$ be a YES-instance for $\gicShort$ as obtained by the reduction. Then there is a caterpillar $C_0$ such that $\cost(G',C_0) = \sum_{uv \in E'}\dist_{C_0}(u,v) \le W$.

    First, we show a useful lemma that tells us how to optimally embed a clique into a degree-$\Delta$ caterpillar.
\begin{lemma}
    Let $N \in \NN$. Every optimal degree-$\Delta$ caterpillar $C$ for $K_N$ is full.
    \label{prop:opt_clique}
\end{lemma}
\begin{proof}
    For $N \in \{1,2\}$, the optimal caterpillar clearly is $C = K_N$.
    Let $N > 2$.
    Assume $C = (V,E)$ is a non-full degree-$\Delta$ caterpillar that is optimal for $K_N$. Let $s_0 s_1 \dots s_\ell$ be the spine of $C$. For all $0 \leq j \leq \ell$, let $r_j$ be the number of legs of $s_j$. Assume without loss of generality that $r_0 \leq r_\ell$. We pick a leg $v$ from $\seg_C(s_0)$ and give it to the closest non-full segment $\seg(s_i), i\neq 0$, that is, we construct a new degree-$\Delta$ caterpillar 
    \[C' = (V, (E \cup \{vs_i\}) \setminus \{ vs_0\}).\] 
    We claim that $\cost(K_N, C') < \cost(K_N, C)$ which is a contradiction to the optimality of $C$. To see this, partition 
    \[V \setminus \{v\} = V(\seg_{C'}(s_0)) \cup L \cup V(\seg_C(s_i))\cup R\] 
    where \[L = \bigcup_{j=1}^{i-1} V(\seg_C(s_j)) \quad \text{and} \quad R = \bigcup_{k=i+1}^\ell V(\seg_C(s_k)).\] Note that the distances between any two vertices in $ V \setminus \{v\}$ in $C$ and $C'$ are the same. As $L$ induces a full caterpillar in $C$ and in $C'$, the sum over distances from $v$ to nodes in $L$ is also the same in $C$ and $C'$ by symmetry. However, the distances between $v$ and the vertices in $\seg_{C'}(s_0)$ increases in $C'$, and the distances between $v$ and vertices in $\seg_C(s_i)\cup R$ decreases in $C'$. Concretely, we have
    \begin{align*}
       \cost(K_N, C')= &\cost(K_N, C) 
       + \underbrace{(r_0 - 1) \cdot i + i}_{\text{dist.\ to } \seg_{C'}(s_0)}
       - \underbrace{(r_i \cdot i + i)}_{\text{dist.\ to } \seg_{C}(s_i)} 
       - \underbrace{i \cdot |R|}_{\text{dist.\ to } R} \\
       = &\cost(K_N, C) + (r_0 - r_i -1 -|R|) \cdot i .
    \end{align*}
    If $i = \ell$, then $$(r_0 - r_i -1 -|R|) \cdot i = (r_0 - r_\ell - 1) \cdot \ell <0$$ by assumption. If $i < \ell$, then $R$ contains $\seg(s_\ell)$ and again
    \[(r_0 - r_i -1 -|R|) \cdot i \leq (r_0 - r_i -1 -(r_\ell + 1)) \cdot i < 0\] by assumption. Thus, $\cost(K_N, C') < \cost(K_N, C)$.
\end{proof}
    
    For a caterpillar $C$, let $s^C_1 \dots s_{t_C}^C$ be its spine. Let $C(i) := \seg(s^C_i)$ be the $i$th segment of $C$ for all $i \leq t_C$. Define 
    $$\alpha_C : V' \to [t_C], \quad \alpha_C(v) = i \quad \text{if } v \in V(C(i)).$$ A segment $C(i)$ is called \emph{clique segment} if $V(C(i)) \subseteq U$. We show that we may assume that $C_0$ has at most $\ell + n + 3$ clique segments.

    \begin{claim}
        We may assume that $C_0$ has at most $\ell + n + 3$ clique segments.
    \end{claim}
    \begin{proof}
        We transform $C_0$ into a caterpillar with at most $\ell + n + 3$ clique segments. A \emph{clique segment block} $$B=C_0 \left[\bigcup_{i \le j \le i+x} V(C_0(j)) \right]$$ is a subgraph of $C_0$ induced by consecutive clique segments such that $C_0(i-1)$ and $C_0(i+x+1)$ (if they exist) are not clique segments. The proof is split into three parts.
        \begin{enumerate}
            \item Clique segment blocks are full caterpillars.
            \item There are at most $n+1$ clique segment blocks.
            \item If $C_0$ had more than $\ell + n + 3$ clique segments, then $C_0$ contains strictly more than $r$ clique vertices which is a contradiction.
        \end{enumerate}
        \textit{Part 1.} Let $$B=C_0 \left [ \bigcup_{i \le j \le i+x} V(C_0(j)) \right]$$ be a clique segment block. Let $$L=\bigcup_{1 \le j \le i-1}V(C_0(j)) \quad \text{ and } \quad R=\bigcup_{i+x+1 \le j \le t_{C_0}}V(C_0(j)).$$ Assume $|L| \le |R|$. If $B$ is not a full caterpillar yet, then transform $B$ into a full caterpillar as follows: (a) Iteratively select a leg of the leftmost segment of $B$ that contains at least one leg. Make this leg a leg of the rightmost segment of $B$ that is not full. (b) If (a) is not applicable anymore, then select the rightmost spine $s_j$ of $B$ without any legs and make it a leg of segment $C_0(j+1)$ if $C_0(j+1)$ is not full and connect $s_{j-1}$ (if $s_{j-1}$ exists) with $s_{j+1}$. If $C_0(j+1)$ is full, then make $s_{j-1}$ (if $s_{j-1} \in V(B)$) a leg of $s_j$ and connect $s_{j-2}$ (if $s_{j-2}$ exists) with $s_j$. It is easy to see that exhaustively applying (a) and then exhaustively applying (b) turns $B$ into a full caterpillar.
        
        We analyze the new cost incurred by the edges $e = uv \in E'$. First, consider edges $e \in E' \setminus V(B)^2$ not within $B$. Each application of (a) to a leg $u \in V(B)$ reduces the cost by $d$ for each $v \in R$ and increases the cost by $d$ for each $v \in L$ where $d$ denotes the distance between the old spine vertex and the new spine vertex of $u$. The cost of edges $u'v'$ with $u' \in L, v' \in R$ do not change. By assumption $|L| \le |R|$. Thus, $d \cdot |L| - d \cdot |R| \le 0$ and the cost contributed by edges $e \in E' \setminus V(B)^2$ decrease when we apply (a). Each application of (b) to a spine vertex $u \in V(B)$ increases the cost by $1$ for each $v \in L$ and decreases the cost by $0$ for each $v \in R$. The cost of edges $u'v' \in E'$ with $u' \in L, v' \in R$ decrease by $1$ as the length of the spine of $B$ decreases by $1$. As $1 \cdot |L| - 0 \cdot |R| - 1 \cdot |L| \cdot |R| \le 0$, the cost contributed by edges $e \in E' \setminus V(B)^2$ decrease. Next, consider edges $e \in E' \cap V(B)^2$ within $B$. After all applications of (a) and (b) $B$ is a full caterpillar. Thus, the contributed cost decrease by \Cref{prop:opt_clique}.
        
        \textit{Part 2.} Clique segment blocks can only be separated by segments containing a vertex $v \in V$. As $|V|=n$, there are at most $n+1$ clique segment blocks.
        
        \textit{Part 3.} Assume for contradiction that $C_0$ has at least $\ell + n + 4$ clique segments. By Part~1 we may assume that all clique segment blocks are full caterpillars. By Part~2 there are at most $n+1$ clique segment blocks, each of which contains at most one non-full clique segment. Thus, $C_0$ contains at least $\ell + n + 4 - (n+1) = \ell + 3$ full clique segments. Each full clique segment contains at least $\Delta - 1$ clique vertices. Then, $C_0$ has at least
        $$
            (\ell + 3)(\Delta - 1) = \ell (\Delta - 1) + 3\Delta - 3 > \ell (\Delta - 1) + 2 = r
        $$
        clique vertices which is a contradiction.
    \end{proof}
    
    Let $\mathcal{C}$ denote the set of all degree-$\Delta$ caterpillars on the vertex set $V'$ that have at most $\ell + n + 3$ clique segments. Observe that for any $C \in \mathcal{C}$
    \begin{align*}
        \sum_{uv \in E'}\dist_C(u,v) + \sum_{uv \in E}\dist_C(u,v) &= \sum_{u,v \in V'}\dist_C(u,v) \\
        &\ge \opt_{\Cater_\Delta} (K_{n+r}) = W + k\ell.
    \end{align*}
    For $C_0 \in \mathcal{C}$ we have $$\sum_{uv \in E}\dist_{C_0}(u,v) \ge W - \sum_{uv \in E'}\dist_{C_0}(u,v) + k\ell \ge k\ell$$ as $W - \sum_{uv \in E'}\dist_{C_0}(u,v) = W - \cost(G', C_0) \ge 0$ by assumption.

    Define 
    \begin{align*}
    W^* &= \max_{C \in \mathcal{C}} \sum_{uv \in E}\dist_C(u,v), \text{ and} \\
    \mathcal{C}^* &= \{C \in \mathcal{C} \mid \sum_{uv \in E}\dist_C(u,v) = W^*\}.    
    \end{align*}
    Clearly, $W^* \ge k\ell$ and $\mathcal{C}^* \neq \varnothing$ (as witnessed by $C_0$).

    We now show that there is a $C \in \mathcal{C}^*$ such that no vertex $v \in V$ has clique segments $C(i)$ and $C(i')$ to the left and right, respectively, that is, 
    $i < \alpha_C(v) < i'$. 
    This will allow us to form a partition $S_1\cup S_2 = V$.
    
    For each $C \in \mathcal{C}^*$ define 
    \begin{align*}        
    S(C) = \{v \in V \mid 
    &\exists i,i' \in [t_C] : C(i),C(i') \text{ are clique segments and } \\
    &i < \alpha_C(v) < i'\}
    \end{align*}
    as the set of vertices of $V$ not having the above property and define $m(C) = |S(C)|$. Let $C^* \in \mathcal{C}^*$ which minimizes $m(C^*)$, that is, $m(C^*) \le m(C)$, for all $C \in \mathcal{C}^*$. We claim that $m(C^*)=0$.

\begin{claim}
    Let $C^* \in \argmin_{C \in \mathcal{C}^*} m(C)$. Then $m(C^*)=0$.
\end{claim}
\begin{proof}
    Assume for contradiction that $m(C^*)>0$. Then, $S(C^*) \neq \varnothing$. Let $v_0$ be the rightmost vertex of $S(C^*)$, that is, $\alpha_{C^*}(v_0) \ge \alpha_{C^*}(v)$, for all $v \in S(C^*)$ and let $i_0 = \alpha_{C^*}(v_0)$.

    For each $v \in V'$ define
    \begin{align*}
        L(v) &= |\{w \in V \mid vw \in E , \alpha_{C^*}(w) < \alpha_{C^*}(v)\}| \\
        R(v) &= |\{w \in V \mid vw \in E , \alpha_{C^*}(w) > \alpha_{C^*}(v)\}|
    \end{align*}
    and define $L(i)=\sum_{\alpha_{C^*}(v)=i}L(v)$, $R(i)=\sum_{\alpha_{C^*}(v)=i}R(v)$, for $i \in [t_{C^*}]$.

    Note that $u \in U$ implies $L(u)=R(u)=0$, and $C^*(i)$ is a clique segment implies $L(i)=R(i)=0$.

    \textit{Case 1.} Assume that $L(i_0) \ge R(i_0)$. Let $C^*(j_0)$ be the rightmost clique segment, that is, $j_0 \ge i$, for all $i \in [t_{C^*}]$ such that $C^*(i)$ is a clique segment. By definition of $v_0$, every $C^*(i)$ with $i_0 < i \le j_0$ is a clique segment. Consider the caterpillar $C \in \mathcal{C}$ obtained from $C^*$ by swapping the segments $C^*(i_0)$ and $C^*(j_0)$. Then, as $L(i_0) \ge R(i_0)$ we have that $\sum_{uv \in E}\dist_C(u,v) \ge W^*$ and thus $C \in \mathcal{C}^*$ but $m(C) < m(C^*)$ which is a contradiction to the choice of $C^*$.

    \textit{Case 2.} Assume that $L(i_0) < R(i_0)$. Let $$t = \max\{i \in [t_{C^*}] \mid i < i_0 \text{ and } L(i) \ge R(i)\}.$$ The value of $t$ is well-defined since there exists a clique segment $C^*(i)$, with $i < i_0$. By definition of $t$ and as $L(i_0) < R(i_0)$ we have that $L(t+1) < R(t+1)$. Consider the caterpillar $C \in \mathcal{C}$ obtained from $C^*$ by swapping the segments $C^*(t)$ and $C^*(t+1)$. Then $\sum_{uv \in E}\dist_C(u,v) > W^*$ which is a contradiction to the definition of $W^*$.
    
    Thus, $m(C^*)=0$.
\end{proof}

    Define the sets
    \begin{align*}
        S_1&= \{v \in V \mid \nexists i \in [t_{C^*}] : C^*(i) \text{ is a clique segment and } i < \alpha_{C^*}(v)\} \\
        S_2&= \{v \in V \mid \nexists i \in [t_{C^*}] : C^*(i) \text{ is a clique segment and } i > \alpha_{C^*}(v)\}.
    \end{align*}
    We show that $S_1,S_2 \subseteq V$ is a partition of $V$.

\begin{claim}
    The sets $S_1,S_2 \subseteq V$ as defined above from a partition of $V$.
\end{claim}
\begin{proof}
    Assume for contradiction that $S_1 \cup S_2 \neq V$. Let $v \in V$ but $v \notin S_1 \cup S_2$. Then $v \in S(C^*)$ and $m(C^*) > 0$, a contradiction.
\iffalse
    \begin{claim}
        There are at least $\ell$ many clique segments.
    \end{claim}
    \begin{proof}
        There are at most $n$ many non-clique segments. Each of these non-clique segments contains at most $(\Delta - 1)$ many clique vertices. Hence, at least $r - (\Delta - 1) n$ many clique vertices are in clique segments. Each inner segment contains at most $\Delta-1$ many vertices and the outer segments contain at most $\Delta$ many vertices. To get a lower bound on the number of clique segments, we divide $r - (\Delta -1)n-2$ many clique vertices into $\Delta -1$ sized segments---two of which can be outer segments and contain an additional leg each. Thus, these $r - (\Delta -1)n$ clique vertices form at least 
        \begin{align*}
            \frac{r - (\Delta -1)n-2}{\Delta - 1} 
            &= \frac{(\ell + n) (\Delta-1)+2+n' - (\Delta -1)n-2}{\Delta - 1} \\ 
            &= \frac{\ell(\Delta -1) + (\Delta-1)n - (\Delta -1)n + 2-2 + n'}{\Delta - 1} \\
            &= \frac{\ell(\Delta -1) + n'}{\Delta - 1} \ge \ell 
        \end{align*}
        many clique segments. 
    \end{proof}
\fi

    Now, we show $S_1 \cap S_2 = \varnothing$. Each segment that is not a clique segment contains at most $\Delta - 1$ clique vertices except for the two outermost segments that might contain $\Delta$ clique vertices. As $\ell > n$ and $r \ge \ell (\Delta - 1) + 2$ there is at least one clique segment. Thus, $S_1 \cap S_2 = \varnothing$.
\end{proof}

    As $C^* \in \mathcal{C}^*$ it holds that $k\ell \le \sum_{uv \in E}\dist_{C^*}(u,v)$ and $C^*$ has at most $\ell + n + 3$ clique segments. For $i \in \{1,2\}$, the distance between two vertices in $S_i$ is at most $n+1$ because all vertices of $S_i$ appear in consecutive non-clique segments of which there are at most $n$. The distance between any two vertices in $C^*$ is at most $\ell + 2n + 5$. We have
    \begin{align*}
        k\ell &\le \sum_{uv \in E}\dist_{C^*}(u,v) \\
        &= \sum_{\substack{uv \in E \\ u,v \in S_1}}\dist_{C^*}(u,v) + \sum_{\substack{uv \in E \\ u,v \in S_2}}\dist_{C^*}(u,v) + \sum_{\substack{uv \in E \\ u \in S_1,v \in S_2}}\dist_{C^*}(u,v) \\
        &\le \binom{n}{2} \cdot (n+1) + \binom{n}{2} \cdot (n+1) + (\ell + 2n + 5) \cdot |\{uv \in E \mid u \in S_1,v \in S_2\}| \\
        &\le \binom{n}{2} \cdot (2n+2) + \binom{n}{2} \cdot (2n+5) + \ell \cdot |\{uv \in E \mid u \in S_1,v \in S_2\}|\\
        &= \binom{n}{2} \cdot (4n+7) + \ell \cdot |\{uv \in E \mid u \in S_1,v \in S_2\}|\text{.}
    \end{align*}
    Dividing by $\ell$ yields $|\{uv \in E \mid u \in S_1,v \in S_2\}| \ge k$ as $\ell > \binom{n}{2} \cdot (4n+7)$ and $k$ is an integer.
    
    \textbf{Runtime.}
    Note that by \Cref{prop:opt_clique} any full degree-$\Delta$ caterpillar $C$ for $K_{n+r}$ is optimal for $K_{n+r}$ by the symmetry of $K_{n+r}$. Hence, we can efficiently compute $\opt_{\Cater_\Delta}(K_{n+r})$ by fixing any such caterpillar $C$ and then simply computing $\cost(K_{n+r},C)$. The rest of the reduction runs in polynomial time.
\end{proof}
%}

%{\color{red} Leon's Vorschlag: definition of full caterpillars}
%A $\Delta$-legged \emph{caterpillar} $C$ is a connected graph
%consisting of a path $S$ - called the \emph{spine} of $C$ - and for 
%each spine vertex $s \in V(S)$, a fresh set of \emph{leg vertices} 
%$\leg(s)$ of size at most $\Delta$ whose elements are all connected 
%to $s$. In other words, 
%\begin{align*}
%    V(C) &:= V(S) \sqcup \bigsqcup_{s \in V(S)} \leg(s), \\
%    E(C) &:= 
%    E(S) \cup \bigcup_{s \in V(S)}  \{\{s,l\} ~ |~ l \in \leg(s)\}.
%\end{align*}
%Let $\Cater_\Delta$ be the class of all $\Delta$-legged caterpillars. 
%We call a $\Delta$-legged caterpillar $C$ \emph{full} if all but one 
%outermost spine vertex have $\Delta$ legs. Let $\FullCater_\Delta$ be 
%the class of all full, $\Delta$-legged caterpillars.
%
%\begin{claim}
%    For all $n \in \NN$, any optimal caterpillar $C$ in $K_{n}$, 
%    that is, 
%    \[
%        C \in \argmin \{\cost(K_{n}, C') \mid 
%        C' \in \Cater_\Delta, |V(C')| = n\} ,
%    \] is full and any full caterpillar is optimal in $K_n$.
%\end{claim}
%\begin{proof}
%    We prove this by strong induction on $n$. 
%    Clearly $C=K_1$ is a full caterpillar.
%    Suppose $C$ is an optimal $\Delta$-legged caterpillar 
%    for $K_{n}$, $n >1$. Assume that we can delete one edge of the 
%    spine of $C$ so that the size of one of the resulting 
%    caterpillars is divisible by $\Delta + 1$. 
%\end{proof}

\section{Designing Caterpillars for Trees}\label{sec:tic}

Recall that our general approximation approach provides a 4-approximation for \gicShort{} if the input graph is a tree (see \Cref{corollary:factor-4-approx}); we will subsequently refer to this special case of \gic{} as \tic{} (\ticShort{}).
An immediate question is, whether turning to approximation to solve \ticShort{} is necessary?
% An immediate question is, whether we can do better? 
% As our generic approximation is not tailored to trees, the answer is probably yes. 
While the polynomial-time algorithms for MLA for trees are non-trivial, the structure of minimum linear arangements of trees is by now well-understood~\cite{chung1984optimal,goldberg1976algorithm,shiloach1979minimum}.
In preliminary considerations, we were able to transfer a few of these structural observations, but not all.
% Given the close connection between \gicShort{} and MLA which we already exploited in \cref{sec:approx}, it seems thus plausible to transfer also the polynomial-time algorithm for trees.
Somewhat surprisingly, we cannot transfer all observations as we prove \ticShort{} to be \NP-hard.

A natural question is why the techniques underlying the tractability of MLA on trees do not carry over to caterpillar hosts.
All three mentioned algorithms for MLA on trees follow a divide-and-conquer approach:
they decompose the input tree into subtrees, recursively compute MLAs for the subtrees, and concatenate the resulting arrangements into an MLA for the whole tree.
Crucially, this strategy relies on the property that an MLA for the whole tree, restricted to any subtree in the decomposition, is again an MLA for that subtree.
This property fails for caterpillar hosts.
While an optimum caterpillar for a path-like subtree may again be path-like in isolation, in the context of the whole tree it can be beneficial to \emph{compress} such a caterpillar---that is, to move vertices off the spine and attach them as legs---to shorten edges with one endpoint in the subtree and one endpoint outside.
Consequently, an optimum caterpillar for the whole tree need not restrict to an optimum caterpillar for a subtree, which obstructs a direct divide-and-conquer approach.

We reduce from the strongly $\NP$-hard problem \textsc{$3$-Partition} (see §4.2.2, Theorem 4.4 in the book by Garey and Johnson~\cite{garey1979computers}).
\OptproblemDef{3-Partition} 
{Positive integers $a_1,\dots,a_{3n}$ and $B \in \NN$ satisfying $B / 4 < a_i < B/ 2$.}
{Is there a partition of the index set $\{1, \dots ,3n\}$ into parts $P_1, \dots ,P_n$ such that $|P_j| = 3$ and $\sum_{i \in P_j} a_i = B$ for all $1 \leq j \leq n$?}
% \todo[inline]{LKu: Why talk about multisets and partitions of multisets, and mention that the size of the parts is 3 only implicitly? Personally, I think giving the $a_i$'s as a list and talking about a partition of the index set cleaner. No need to mention multisets. Also, the proofs use the latter convention throughout.\\AN: I am fine with a different notation. But the one used before was inconsistent: ``partition of the index set $\{1, \dots , 3n\}$ into parts $P_1, \dots ,P_n$ such that $P_j = \{a_{j_1},a_{j_2},a_{j_3}\}$'' What contains~$P_j$? Indices or some $a_i$'s? \\ 
% Alternative proposal: Partition of the index set $I = \{1, \dots , 3n\}$ into parts $I_1, \dots ,I_n$ of size three each, so that each~$P_j = \{a_\ell \mid \ell \in I_j\}$ contains numbers that add up to~$B$, i.e., $\sum_{a \in P_j} a = B$}
% Note that the condition~$B / 2 < a_i < B/ 4$ implies that any part~$P$ contains exactly three elements of the multiset---hence the name of the problem.
% \textsc{3-Partition} asks whether, given integers $a_1,\dots,a_{3n}$ and $B$ that satisfy $B / 2 < a_i < B/ 4$ for all $1 \leq i \leq 3n$, there is a partition of the index set $\{1, \dots , 3n\}$ into parts $P_1, \dots ,P_n$ such that $P_j = \{a_{j_1},a_{j_2},a_{j_3}\}$ and $a_{j_1} + a_{j_2} + a_{j_3} = B$ for all $1 \leq j \leq n$. 
The general idea of our reduction is straightforward. 
We construct a tree $T$ that contains $3n$ disjoint subtrees $A_1, \dots, A_{3n}$ of sizes $a_1,\dots, a_{3n}$, respectively, and $n$ subtrees $B_1, \dots, B_n$, each of size $\kappa_2+1$ (here~$\kappa_2$ is a number much larger than~$B$). 
We choose the degree bound to be $\Delta = \kappa_2 + B$. 
The goal now is to adapt $T$ and $W$ so that any caterpillar~$C$ with $\cost(T,C) \leq W$ will have $n$ segments containing all vertices from one subtree $B_j$ and the vertices from three subtrees $A_{j_1}, A_{j_2}, A_{j_3}$. 
Thus, any partition of the index set satisfying the \textsc{3-Partition} conditions will correspond to the segments of a caterpillar~$C$ that incurs cost~$\cost(T,C) = W$ and vice versa.
% if and only if $\cost(T,C) \leq W$. In fact, the costs of an embedding of $T$ into $C$ will be exactly $W$. % mention tight costs

Most of the proof deals with the construction of $T$. 
In order to force the roots (called $\beta_1,\dots, \beta_n$ in the proofs) of $B_1, \dots, B_n$ to all be spine vertices (and thus not appear in the same segment), we give them so many legs that if we would not choose them as spine vertices, the cost would exceed $W$. 
A simple swapping argument then gives that their legs must also be in the same segment, that is, each $B_j$ is a subtree of a segment. 
As we want the vertices of $A_1,\dots,A_{3n}$ to be legs of these segments as well, we need to exclude the possibility of them forming segments themselves. 
This is achieved by introducing high-degree subtrees (called $B_1^*,\dots ,B_{\kappa_1}^*$ in the proofs) to $T$ which each induces a segment of $C$ for the same reasons $B_1, \dots, B_n$ do. 
Then, forming a segment solely for some $A_i$, would make the spine too long and again the cost would exceed $W$. 
Hence, $C$ can be assumed to have the intended structure mentioned above together with additional segments for $B_1^*,\dots ,B_{\kappa_1}^*$. 
Consult \Cref{fig:construction_T} and \Cref{fig:Tree-into-Caterpillar_C} for visualizations of the tree $T$ and the (intended) caterpillar $C$, respectively. 
The bound $W$ is chosen as the cost of the intended caterpillar for $T$.
% The backward direction of the correctness proof (\Cref{lm:Tree-into-Caterpillar_<=}) heavily uses scaling of the \textsc{3-Partition} instance to assure certain properties of $T$.

\begin{theorem}
	\tic{} is \NP-hard.
%     Given a tree $T$ and $\Delta, W \in \NN$, it is $\NP$-hard to find a $\Delta$-legged (or degree-$\Delta$) caterpillar $C$ such that $\cost(T,C) \leq W$.
    \label{thm:Tree-into-Caterpillar}
%      \todo{LKu: We need to make the distinction between given degree vs. given number of legs explicit}
\end{theorem}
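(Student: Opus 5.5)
We reduce from \textsc{3-Partition}, which is strongly \NP-hard, so we may assume all numbers are encoded in unary and polynomially bounded in~$n$. Given $a_1,\dots,a_{3n}$ and $B$, we first scale the instance (multiply all $a_i$ and $B$ by a large polynomial factor) so that the subtrees $A_i$ are large compared to the lower-order error terms below. We then build a tree $T$ as follows. A central spine-like path carries two kinds of high-degree gadgets. The first kind are $B_1,\dots,B_n$: stars with centers $\beta_j$ and $\kappa_2$ leaves each. The second kind are $\kappa_1$ separator stars $B_1^*,\dots,B_{\kappa_1}^*$, whose centers have degree at least~$\Delta$ in~$T$. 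For each $i$, the subtree $A_i$ is a small tree on $a_i$ vertices, e.g., a path or star hanging from a common hub adjacent to all $\beta_j$ (or attached to a central vertex). We set $\Delta=\kappa_2+B$ and $\kappa_2\gg\kappa_1\gg B$. The bound $W$ is the explicit cost of the intended caterpillar $C^\circ$. In $C^\circ$, each $\beta_j$ is a spine vertex whose legs are its own $\kappa_2$ leaves plus the $B$ vertices of three subtrees $A_{j_1},A_{j_2},A_{j_3}$ with sum exactly~$B$. Each $B^*$-center is a full spine segment, and the segments are ordered along the spine as prescribed by the shape of~$T$. Computing $W$ is a closed-form sum, polynomial by unary encoding.

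\textbf{Forward direction.} From a 3-partition we build $C^\circ$ directly and verify $\cost(T,C^\circ)=W$. This holds by construction once the order of segments and the placement of the hub vertices are fixed. It is a routine computation, and it is independent of which triples are used because every $B_j$-segment has the same size and shape.

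\textbf{Backward direction.} This is the bulk of the proof, carried out as a sequence of normalization claims about a caterpillar $C$ with $\cost(T,C)\le W$.
\begin{enumerate}
\item Each $\beta_j$ and each $B^*$-center is a spine vertex. If such a center were a leg, all but one of its $\ge\kappa_2$ incident tree edges would be stretched by at least~$1$. Choosing $\kappa_2$ larger than the total slack in $W$ then gives a contradiction.
\item By a swapping argument in the style of \Cref{prop:opt_clique}, every leaf of $\beta_j$ lies in $\seg_C(\beta_j)$. Since $\Delta=\kappa_2+B$, this leaves room for at most $B$ further legs in that segment, and the analogous statement holds for the $B^*$-segments.
\item No vertex of any $A_i$ is a spine vertex. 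An extra spine vertex lengthens the spine, and every edge crossing it (in particular the $\kappa_1$ separator-related edges routed through it) pays at least~$1$ more. With $\kappa_1$ chosen larger than the savings obtainable from $A_i$'s internal edges, this exceeds~$W$.
\end{enumerate}
Consequently all $3n$ subtrees $A_i$, totalling $nB$ vertices, are distributed as legs among the $n$ segments $\seg_C(\beta_j)$, each with capacity~$B$. Hence every segment is exactly full.

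The remaining step is to show that no $A_i$ is split across two segments. Splitting an $A_i$ stretches at least one internal edge of $A_i$ by at least~$2$, and since the construction makes $W$ tight, any such stretch exceeds~$W$. Once no $A_i$ is split, the bound $B/4<a_i<B/2$ forces exactly three subtrees per segment, yielding a valid 3-partition.

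\textbf{Main obstacle.} The hardest part is choosing $\kappa_1,\kappa_2$, the scaling, and the attachment of the $A_i$ to the $\beta_j$ so that every deviation from $C^\circ$ costs strictly more than it saves, while $W$ remains tight. In particular, the lower bound on the cost of an arbitrary caterpillar must account exactly for the distances between the $A_i$ and their hub. I would first try making each $A_i$ a star whose center is adjacent to a single vertex placed on the spine next to all $B_j$-segments. Then every admissible configuration has the same cost for the edges to the hub, and only splitting or spine-insertion changes the cost.
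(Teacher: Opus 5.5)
Your overall plan matches the paper's: a reduction from \textsc{3-Partition} with stars $B_j$ carrying $\kappa_2$ leaves, $\kappa_1$ separator stars, $\Delta=\kappa_2+B$, $W$ equal to the cost of the intended caterpillar, followed by normalization claims. Your step~1 is sound, and a bit more direct than the paper's route via the Spine-Degree Lemma.

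The genuine gap is that you never commit to how $T$ is wired together, and the steps that carry the weight depend on exactly that choice. The missing idea is to make $\beta_1$ itself the hub. In the paper, $\beta_1$ is adjacent to every $\alpha_i$, to $\beta_2,\dots,\beta_n$, and to every separator center $\beta^*_k$. Your tentative ``single vertex placed on the spine next to all $B_j$-segments'' cannot exist for $n\ge 3$, since a spine vertex has only two spine neighbors.

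This matters for your step~3. Your justification, that every edge crossing a new spine vertex pays at least $1$ more, gives nothing if the new spine vertex sits at an end of the spine, where no edge crosses it. The paper argues differently. If some $\alpha_i$ were a spine vertex, the neighbors of $\beta_1$ would occupy $n+\kappa_1$ segments other than $\seg(\beta_1)$. By the star-into-path lemma (\Cref{lm:star_into_path}), the edges at $\beta_1$ alone then cost at least $\sigma(n+\kappa_1)+\lceil (n+\kappa_1)/2\rceil-1$ beyond $|E(T)|$, and $\kappa_1$ is chosen so this exceeds the slack $W-|E(T)|$. This balancing argument is available only because all separators and all $\alpha_i$ are neighbors of one common vertex.

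Your final step (``splitting an $A_i$ stretches an internal edge, and since $W$ is tight this exceeds $W$'') is circular as ordered. The lower bound $\lambda(n)=3(\sigma(n)+n-1)$ on the extra cost of the edges $\beta_1\alpha_i$ holds only once you know that every $\alpha_i$ is in some $\beta_j$-segment and each such segment has at most three $\alpha$-centers. If the $A_i$ may be split, more than three centers can sit in $\seg(\beta_1)$ or in segments next to it, pushing that cost below $\lambda(n)$. So a split leaf (extra cost $1$) contradicts nothing by tightness alone, and ``exactly three per segment'' cannot be derived only after ``no split''.

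The paper breaks this circle in two earlier steps, both using the scaling you only gesture at:
\begin{itemize}
\item No $\alpha_i$ is a leg of a separator, using $a_i\ge\lambda(n)+3n+4$.
\item At most three $\alpha_i$ lie in any $\beta_j$-segment, using $a_i\ge B/4+\lambda(n)+1$. Four centers in one segment force at least $4\lambda(n)+4$ of their leaves elsewhere, which exceeds the whole budget $\lambda(n)$ reserved for the edges $\beta_1\alpha_i$.
\end{itemize}
Only after these is the bound $\lambda(n)$ valid, and then the tightness argument showing that no $A_i$ is split goes through.

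Relatedly, giving the separator centers degree at least $\Delta$ undermines your step~2. If a separator segment is full, a foreign leg in it may be some $\alpha_i$, and swapping it with a displaced separator leaf is not obviously cost-neutral. The paper gives each $\beta^*_k$ only $\Delta-3n-1$ leaves precisely so that a full segment always contains a foreign \emph{leaf} to swap. This works because the $\alpha_i$ are the only $3n$ non-leaves besides the star centers $\beta_j,\beta^*_k$.
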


Instead of proving this theorem directly, we opt for the more convenient proof where $\Delta$ is an upper bound on the legs of the caterpillar instead of an upper bound on its degree, that is, we show \NP-hardness for the problem of---given a tree $T$ and $\Delta, W \in \NN$---finding a $\Delta$-legged caterpillar $C$ for $T$ with cost at most $W$. This avoids discussing the edge cases for the outermost segments which can have one more leg than the inner segments if the degree is bounded. Later, we remark how this proof implies \NP-hardness for \ticShort{}. 

To prove this, let us begin by constructing $T$ and choosing $\Delta, W \in \NN$. 
Suppose integers $a_1,\dots,a_{3n}$ and $B$ for some $n \in \NN$ such that $B/ 4 < a_i < B/2$ are given. 
We assume without loss of generality that $a_1 + \dots + a_{3n} = nB$; otherwise, a desired partition of the index set can not exist.

Define
\begin{align*}
    \sigma (n) := 
    \begin{cases}
        \frac{(n-1)^2}{4} - \frac{n-1}{2} \quad &\text{if } n \text{ is odd} \\
        \frac{(n-1)^2}{4} \quad &\text{if } n \text{ is even}
    \end{cases} \quad \text{and} \quad 
    \lambda (n) := 3(\sigma(n) + n-1)
\end{align*}
and let
\begin{align*}
    \kappa_0 &:= n B, \quad
    \kappa_1 := 2((\kappa_0 - 3n) + \lambda(n) + 1) + 1 \\
    \kappa_2 &:= (\kappa_0 - 3n) + \sigma(n + \kappa_1) + \lambda(n) + B + 1\text{.}
\end{align*}
We will come back to the semantics of these numbers later.

We will first construct the tree $T$. For all $1 \leq i \leq 3n$, let $A_i$ be a new tree of depth 1 with root $\alpha_i$ and $a_i -1$ leaves. For all $1 \leq j \leq n$, let $B_j$ be a new tree of depth 1 with root $\beta_j$ and $\kappa_2$ leaves.
For all $1 \leq k \leq \kappa_1$, let $B^*_k$ be a new tree of depth 1 with root $\beta^*_k$ and $\kappa_2 + B - 3n -1$ leaves.
We will call the leaves of $A_i$, $B_j$ and $B^*_k$ \emph{$\alpha_i$-leaves}, \emph{$\beta_j$-leaves} and \emph{$\beta^*_k$-leaves}, respectively. 
We now obtain $T$ by connecting $\beta_1$ with $\alpha_1,\dots,\alpha_{3n}$, $\beta_2,\dots,\beta_n$ and $\beta^*_1,\dots,\beta_{\kappa_1}^*$, that is, 
\begin{align*}     
    V(T):=& \bigcup_{i = 1}^{3n} V(A_i) \cup \bigcup_{j = 1}^{n} V(B_j) \cup \bigcup_{k = 1}^{\kappa_1} V(B^*_k), \\
    E(T):=& \bigcup_{i = 1}^{3n} E(A_i) \cup \bigcup_{j = 1}^{n} E(B_j) \cup \bigcup_{k = 1}^{\kappa_1} E(B^*_k) \\
    &\cup \{\beta_1\alpha_i ~|~ 1 \leq i \leq 3n\} 
    \cup \{\beta_1\beta_j ~|~ 2 \leq j \leq n\} 
    \cup \{\beta_1\beta^*_k ~|~ 1 \leq k \leq \kappa_1\}.
\end{align*}
\begin{figure}
    \centering
    \begin{tikzpicture}
    %% T
    % Node layer 2
    \node[bvertex] (b1) at (5.25,2.1) {};
    
    % Node layer 1
    \node[avertex] (a1) at (0,1) {};
    \node (am) at (1,1) {$...$};
    \node[avertex] (a3n) at (2,1) {};

    \node[bvertex] (b11) at (3.1,1) {};
    \node (b1m) at (3.5,1) {$...$};
    \node[bvertex] (b12) at (3.9,1) {};

    \node[bvertex] (b2) at (5,1) {};
    \node (bm) at (6,1) {$...$};
    \node[bvertex] (bn) at (7,1) {};

    \node[bsvertex] (bs1) at (8.5,1) {};
    \node (bsm) at (9.5,1) {$...$};
    \node[bsvertex] (bsn) at (10.5,1) {};
    
    % Node layer 0 (leaves)
    \node[avertex] (a11) at (-0.4,0) {};
    \node (a1m) at (0,0) {$...$};
    \node[avertex] (a12) at (0.4,0) {};

    \node[avertex] (a3n1) at (1.6,0) {};
    \node (a3nm) at (2,0) {$...$};
    \node[avertex] (a3n2) at (2.4,0) {};

    \node[bvertex] (b21) at (4.6,0) {};
    \node (b2m) at (5,0) {$...$};
    \node[bvertex] (b22) at (5.4,0) {};

    \node[bvertex] (bn1) at (6.6,0) {};
    \node (bnm) at (7,0) {$...$};
    \node[bvertex] (bn2) at (7.4,0) {};

    \node[bsvertex] (bs11) at (8.1,0) {};
    \node (bs1m) at (8.5,0) {$...$};
    \node[bsvertex] (bs12) at (8.9,0) {};

    \node[bsvertex] (bsn1) at (10.1,0) {};
    \node (bsnm) at (10.5,0) {$...$};
    \node[bsvertex] (bsn2) at (10.9,0) {};

    % edge layer 2 -- 1 
    \foreach[count = \i] \a in {a1, am, a3n, b11, b1m, b12, b2, bm, bn, bs1, bsm, bsn}
    {
		\draw (b1) to[out=205 + \i * 10,in=90,looseness=0.2] (\a);
    }
%     \draw (b1) -- (a1);
%     \draw (b1) -- (am);
%     \draw (b1) -- (a3n);
% 
%     \draw (b1) -- (b11);
%     \draw (b1) -- (b1m);
%     \draw (b1) -- (b12);
% 
%     \draw (b1) -- (b2);
%     \draw (b1) -- (bm);
%     \draw (b1) -- (bn);
% 
%     \draw (b1) -- (bs1);
%     \draw (b1) -- (bsm);
%     \draw (b1) -- (bsn);
    
    % edge layer 1 -- 0
    \draw (a1) -- (a11);
    \draw (a1) -- (a1m);
    \draw (a1) -- (a12);

    \draw (a3n) -- (a3n1);
    \draw (a3n) -- (a3nm);
    \draw (a3n) -- (a3n2);

    \draw (b2) -- (b21);
    \draw (b2) -- (b2m);
    \draw (b2) -- (b22);

    \draw (bn) -- (bn1);
    \draw (bn) -- (bnm);
    \draw (bn) -- (bn2);

    \draw (bs1) -- (bs11);
    \draw (bs1) -- (bs1m);
    \draw (bs1) -- (bs12);

    \draw (bsn) -- (bsn1);
    \draw (bsn) -- (bsnm);
    \draw (bsn) -- (bsn2);

    % names
    \node (T) at (0,2.2) {$T:$};
    \node (b1n) at (4.8,2.2) {$\beta_1$};
    \draw [
        decoration={
            brace,
            mirror,
            raise=0.3cm
        },
        decorate
    ] (3,1) -- (4,1) node [pos=0.5,anchor=north,yshift=-0.4cm] {\shortstack{$\Delta - B$\\many}};

    \node (a1n) at (-0.3,1) {$\alpha_1$};
    \draw [
        decoration={
            brace,
            mirror,
            raise=0.3cm
        },
        decorate
    ] (-0.5,0) -- (0.5,0) 
    node [pos=0.5,anchor=north,yshift=-0.4cm] {\shortstack{$a_1 -1$ \\ many}};
    
    \node (a3nn) at (2.4,1) {$\alpha_{3n}$};
    \draw [
        decoration={
            brace,
            mirror,
            raise=0.3cm
        },
        decorate
    ] (1.5,0) -- (2.5,0) 
    node [pos=0.5,anchor=north,yshift=-0.4cm] {\shortstack{$a_{3n} -1$\\ many}};
    
    \node (b2n) at (4.7,1) {$\beta_2$};
    % \draw [
    %     decoration={
    %         brace,
    %         mirror,
    %         raise=0.3cm
    %     },
    %     decorate
    % ] (4.5,0) -- (5.5,0) 
    % node [pos=0.5,anchor=north,yshift=-0.4cm] {\shortstack{$\Delta - B$ \\ many}};
    
    \node (bnn) at (7.3,1) {$\beta_n$};
    \draw [
        decoration={
            brace,
            mirror,
            raise=0.3cm
        },
        decorate
    ] (6.5,0) -- (7.5,0) 
    node [pos=0.5,anchor=north,yshift=-0.4cm] {\shortstack{$\Delta - B$ \\ many}};

    \node (bs1n) at (8.1,1) {$\beta^*_1$};
    % \draw [
    %     decoration={
    %         brace,
    %         mirror,
    %         raise=0.3cm
    %     },
    %     decorate
    % ] (8,0) -- (9,0) 
    % node [pos=0.5,anchor=north,yshift=-0.4cm] {\shortstack{$\Delta-3n -1$ \\many}};

    \node (bsnn) at (10.9,1) {$\beta^*_{\kappa_1}$};
    \draw [
        decoration={
            brace,
            mirror,
            raise=0.3cm
        },
        decorate
    ] (10,0) -- (11,0) 
    node [pos=0.5,anchor=north,yshift=-0.4cm] {\shortstack{$\Delta -3n -1$ \\many}};
    \end{tikzpicture}
    \caption{The tree~$T$ constructed in our reduction.}
    \label{fig:construction_T}
\end{figure}
Consult \Cref{fig:construction_T} for a visualization of the tree $T$.
We define
\begin{align*}
    W:= |E(T)|  + \sigma(n+\kappa_1) + \lambda(n) + (\kappa_0 -3n) \quad \text{and} \quad 
    \Delta := \kappa_2 + B.
\end{align*}
The bound $W$ will turn out to be exactly the cost of the intended caterpillar $C$ for $T$ constructed in the proof of \Cref{lm:Tree-into-Caterpillar_=>} which is visualized in \Cref{fig:Tree-into-Caterpillar_C}.
    
We now prove the forward direction of the reduction, that is, for a given \textsc{3-Partition} YES-instance, we explicitly give a $\Delta$-legged caterpillar $C$ such that $\cost(T,C) = W$.
\begin{lemma}
    If there is a partition of $\{1, \dots, 3n\}$ into parts $P_1, \dots ,P_n$ such that $P_j = \{{j_1},{j_2},{j_3}\}$ and $a_{j_1} + a_{j_2} + a_{j_3} = B$ for all $1 \leq j \leq n$, then there is a $\Delta$-legged caterpillar $C$ with $\cost(T,C) = W$. 
    \label{lm:Tree-into-Caterpillar_=>}
\end{lemma}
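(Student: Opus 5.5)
The plan is to build the intended caterpillar $C$ explicitly (as in \Cref{fig:Tree-into-Caterpillar_C}), check that it is $\Delta$-legged, and then compute $\cost(T,C)$ by summing the stretch $\dist_C(u,v)-1$ over all edges of $T$. Since $\cost(T,C)=|E(T)|+\sum_{uv\in E(T)}(\dist_C(u,v)-1)$, it suffices to show that the total stretch is $\sigma(n+\kappa_1)+\lambda(n)+(\kappa_0-3n)$.

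\emph{Construction.} The spine has $n+\kappa_1$ vertices $\beta_1,\dots,\beta_n,\beta^*_1,\dots,\beta^*_{\kappa_1}$, ordered as follows.
\begin{itemize}
\item $\beta_1$ sits in the middle of the spine.
\item $\beta_2,\dots,\beta_n$ occupy the positions closest to $\beta_1$, alternating left and right, so that they form a contiguous block centred at $\beta_1$.
\item The $\beta^*_k$ fill the remaining positions on both sides, split as evenly as possible.
\end{itemize}
The legs are assigned as follows.
\begin{itemize}
\item $\beta_j$ receives its $\kappa_2$ leaves and all vertices of $A_{j_1},A_{j_2},A_{j_3}$. That is $\kappa_2+a_{j_1}+a_{j_2}+a_{j_3}=\kappa_2+B=\Delta$ legs.
\item $\beta^*_k$ receives only its own $\kappa_2+B-3n-1<\Delta$ leaves.
\end{itemize}
Hence $C$ is a $\Delta$-legged caterpillar on $V(T)$.

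\emph{Cost computation, edge class by edge class.}
\begin{itemize}
\item Edges inside $B_j$ and $B^*_k$ join a spine vertex to its own leg, so they have stretch $0$.
\item An edge inside $A_i$ joins $\alpha_i$ and one of its leaves. Both are legs of the same spine vertex, so the distance is $2$ and the stretch is $1$. Summed over all $i$ this gives $\sum_i(a_i-1)=nB-3n=\kappa_0-3n$.
\item The edges $\beta_1\beta_j$ and $\beta_1\beta^*_k$ join $\beta_1$ to every other spine vertex, with stretch $\dist-1$. Because $\beta_1$ is in the middle of a spine with $n+\kappa_1$ vertices, this sum is $\sum_{x\neq\beta_1}(\dist_C(\beta_1,x)-1)$ over one or two arithmetic progressions. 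I would verify that this equals $\sigma(n+\kappa_1)$ by the parity case split used in the definition of $\sigma$.
\item An edge $\beta_1\alpha_i$ with $i\in P_j$ has $\dist_C=\dist_C(\beta_1,\beta_j)+1$, so its stretch is $\dist_C(\beta_1,\beta_j)$. This is $0$ for $j=1$. Hence the total over these edges is $3\sum_{j=2}^n\dist_C(\beta_1,\beta_j)=3\bigl(\sum_{j\ge2}(\dist_C(\beta_1,\beta_j)-1)+(n-1)\bigr)$. Since the $\beta_j$ form a block centred at $\beta_1$, the inner sum is the same middle-of-a-path sum as before, now for a path of $n$ vertices, i.e.\ $\sigma(n)$. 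This gives exactly $\lambda(n)$.
\end{itemize}
Adding the four contributions yields $W$.

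\emph{Main obstacle.} The delicate step is matching the middle-of-path sums to the precise closed form of $\sigma$ in both parity cases. Two things must hold simultaneously:
\begin{itemize}
\item the $\beta_j$-block must be balanced around $\beta_1$, so that its contribution is $\sigma(n)$;
\item the whole spine must be balanced around $\beta_1$, so that its contribution is $\sigma(n+\kappa_1)$.
\end{itemize}
When $n$ is even and $n+\kappa_1$ is odd (recall $\kappa_1$ is odd), the $\beta_j$-block is lopsided by one. I would first place that extra $\beta_j$ on one side and then add one more $\beta^*$ on the other side to rebalance the whole spine. I would then recheck the arithmetic of $\sigma$ carefully against the intended caterpillar, in particular the even case, where the formula must be read as the minimal sum $\sum(\dist-1)$ for the given spine length.
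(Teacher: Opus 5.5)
Your proposal is correct and follows the paper's proof. It uses the same caterpillar: spine $\beta_1,\dots,\beta_n,\beta^*_1,\dots,\beta^*_{\kappa_1}$ with $\beta_1$ central, the $\beta_j$ packed around it, and each triple $A_{j_1},A_{j_2},A_{j_3}$ hung under one $\beta_j$. It also uses the same edge-class accounting $|E(T)|+(\kappa_0-3n)+\sigma(n+\kappa_1)+\lambda(n)$, and your derivation of $\lambda(n)=3(\sigma(n)+n-1)$ from the centred $\beta_j$-block makes explicit what the paper only asserts.

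Your worry about the even case is justified. For a path on an even number $m$ of vertices, the centred stretch sum is $(m-2)^2/4$, not the $(m-1)^2/4$ written in the definition of $\sigma$ (which is not even an integer). So $\sigma$ must indeed be read as the true minimal stretch sum, as you propose.
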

\begin{proof}
    We construct a caterpillar $C$ with $V(C) = V(T)$ such that $\cost(T,C) = W$.
    Its spine $s_1\dots s_\ell$ will consist of the $\beta_j$ and $\beta^*_k$. 
    Thus, the spine will be of length $\ell = n + \kappa_1$. On the spine, $\beta_1$ is placed in the ``middle'', that is, $\beta_1 = s_{\lceil \frac{\ell}{2} \rceil}$. 
    Then, the $\beta_j$ are arranged as close as possible to $\beta_1$, that is,
    \[
        \{\beta_2,\dots,\beta_n\} 
        = \{s_k ~|~ \lceil \ell \slash 2 \rceil - \lfloor (n-1)\slash 2\rfloor \le k \le \lceil \ell\slash 2 \rceil + \lceil (n-1) \slash 2\rceil \} \setminus \{s_{\lceil \frac{\ell}{2} \rceil}\}.
    \]
%     \todo{LKu: If we need to cut more, than remove this math. The explanation + figure should be enough?}
    The $\beta^*_k$ can be placed arbitrarily as the rest of the spine vertices.
    For each $1 \leq k \leq \kappa_1$, the $\beta^*_k$-leaves are the legs of $\beta_k^*$. 
    For $1 \leq j \leq n$, the $\beta_j$-leaves are legs of $\beta_j$. Lastly, for every part $P_j = \{{j_1},{j_2}, {j_3}\},1 \leq j \leq n,$ of the partition of indices of the \textsc{$3$-Partition} YES-instance, place $\alpha_{j_1}, \alpha_{j_2}, \alpha_{j_3}$ together with their respective leaves as legs of some $\beta_{j'}$ (whose segment does not contain any $\alpha_i$ yet). 
    Now, each spine vertex has at most $\Delta$ many legs. Consult \Cref{fig:Tree-into-Caterpillar_C} for a visualization of $C$.
    \begin{figure}
        \centering
        \begin{tikzpicture}[scale=0.95]
            % vertices
            %% spine 
            \node[bsvertex] (llbs) at (0.5,1) {};
            \node (llbs-rhelp) at (1.25,1) {};
        
            \node (lbs-dots) at (1.5,1) {$\dots$};
        
            \node (rlbs-lhelp) at (1.75,1) {};
            \node[bsvertex] (rlbs) at (2.5,1) {};
        
            \node[bvertex] (lb) at (4,1) {};
            \node (lb-rhelp) at (4.75,1) {};
        
            \node (lb-dots) at (5,1) {$\dots$};
        
            \node (mb-lhelp) at (5.25,1) {};
            \node[bvertex] (mb) at (6,1) {};
            \node (mb-rhelp) at (6.75,1) {};
        
            \node (rb-dots) at (7,1) {$\dots$};
            
            \node (rb-lhelp) at (7.25,1) {};
            \node[bvertex] (rb) at (8,1) {};
        
            \node[bsvertex] (lrbs) at (9.5,1) {};
            \node (lrbs-rhelp) at (10.25,1) {};
        
            \node (rbs-dots) at (10.5,1) {$\dots$};
        
            \node (rrbs-lhelp) at (10.75,1) {};
            \node[bsvertex] (rrbs) at (11.5,1) {};
        
            % legs 
            \node[bsvertex] (llbs-lleaf) at (0,0) {};
            \node (llbs-dots) at (0.5,0) {$\dots$};
            \node[bsvertex] (llbs-rleaf) at (1,0) {};
        
            \node[bsvertex] (rlbs-lleaf) at (2,0) {};
            \node (rlbs-dots) at (2.5,0) {$\dots$};
            \node[bsvertex] (rlbs-rleaf) at (3,0) {};
        
            \node[bvertex] (lb-lleaf) at (3.5,0) {};
            \node (lb-dots) at (4,0) {$\dots$};
            \node[avertex] (lb-rleaf) at (4.5,0) {};
        
            \node[bvertex] (mb-lleaf) at (5.5,0) {};
            \node (mb-dots) at (6,0) {$\dots$};
            \node[avertex] (mb-rleaf) at (6.5,0) {};
            
            \node[bvertex] (rb-lleaf) at (7.5,0) {};
            \node (rb-dots) at (8,0) {$\dots$};
            \node[avertex] (rb-rleaf) at (8.5,0) {};
        
            \node[bsvertex] (lrbs-lleaf) at (9,0) {};
            \node (lrbs-dots) at (9.5,0) {$\dots$};
            \node[bsvertex] (lrbs-rleaf) at (10,0) {};
        
            \node[bsvertex] (rrbs-lleaf) at (11,0) {};
            \node (rrbs-dots) at (11.5,0) {$\dots$};
            \node[bsvertex] (rrbs-rleaf) at (12,0) {};
        
            % edges
            %% spine
            \draw (llbs) -- (llbs-rhelp);
        
            \draw (rlbs) -- (rlbs-lhelp);
            \draw (rlbs) -- (lb);
        
            \draw (lb) -- (lb-rhelp);
            
            \draw (mb) -- (mb-lhelp);
            \draw (mb) -- (mb-rhelp);
        
            \draw (rb) -- (rb-lhelp);
            \draw (rb) -- (lrbs);
        
            \draw (lrbs) -- (lrbs-rhelp);
        
            \draw (rrbs) -- (rrbs-lhelp);
            
            %% legs
            \draw (llbs) -- (llbs-lleaf);
            \draw (llbs) -- (llbs-dots);
            \draw (llbs) -- (llbs-rleaf);
        
            \draw (rlbs) -- (rlbs-lleaf);
            \draw (rlbs) -- (rlbs-dots);
            \draw (rlbs) -- (rlbs-rleaf);
        
            \draw (lb) -- (lb-lleaf);
            \draw (lb) -- (lb-dots);
            \draw (lb) -- (lb-rleaf);
        
            \draw (mb) -- (mb-lleaf);
            \draw (mb) -- (mb-dots);
            \draw (mb) -- (mb-rleaf);
        
            \draw (rb) -- (rb-lleaf);
            \draw (rb) -- (rb-dots);
            \draw (rb) -- (rb-rleaf);
        
            \draw (lrbs) -- (lrbs-lleaf);
            \draw (lrbs) -- (lrbs-dots);
            \draw (lrbs) -- (lrbs-rleaf);
        
            \draw (rrbs) -- (rrbs-lleaf);
            \draw (rrbs) -- (rrbs-dots);
            \draw (rrbs) -- (rrbs-rleaf);
        
            % labels
            %% spine
            \node (bm-label) at (6,1.25) {$\beta_1$};
        
            \node (c) at (-0.25,1.25) {$C:$};
        
            \draw [
                decoration={brace,raise=0.4cm},decorate
            ] (0.5,1) -- (2.5,1) 
            node [pos=0.5,anchor=south,yshift=0.5cm] {
            \shortstack{first half \\of $\beta_1^*,\dots, \beta_{\kappa_1}^*$}
            };
        
            \draw [
                decoration={brace,raise=0.5cm},decorate
            ] (4,1) -- (8,1) 
            node [pos=0.5,anchor=south,yshift=0.6cm] {$\beta_1,\dots, \beta_{n}$};
        
            \draw [
                decoration={brace,raise=0.4cm},decorate
            ] (9.5,1) -- (11.5,1) 
            node [pos=0.5,anchor=south,yshift=0.5cm] {
            \shortstack{second half \\of $\beta_1^*,\dots, \beta_{\kappa_1}^*$}
            };
        
            %% legs
            % \draw [
            %     thick,decoration={brace,mirror,raise=0.2cm},decorate
            % ] (0,0) -- (1,0) 
            % node [pos=0.5,anchor=north,yshift=-0.3cm] {\shortstack{$\Delta - 3n -1$ \\ many}};
        
            \draw [
                decoration={brace,mirror,raise=0.3cm},decorate
            ] (2,0) -- (3,0) 
            node [pos=0.5,anchor=north,yshift=-0.4cm] {\shortstack{$\Delta - 3n -1$ \\ many}};
        
            % \draw [
            %     thick,decoration={brace,mirror,raise=0.2cm},decorate
            % ] (3.5,0) -- (4.5,0) 
            % node [pos=0.5,anchor=north,yshift=-0.3cm] {$\Delta$ many};
        
            % \draw [
            %     thick,decoration={brace,mirror,raise=0.2cm},decorate
            % ] (5.5,0) -- (6.5,0) 
            % node [pos=0.5,anchor=north,yshift=-0.3cm] {$\Delta$ many};
        
            \draw [
                decoration={brace,mirror,raise=0.3cm},decorate
            ] (7.5,0) -- (8.5,0) 
            node [pos=0.5,anchor=north,yshift=-0.4cm] {$\Delta$ many};
        
            % \draw [
            %     thick,decoration={brace,mirror,raise=0.2cm},decorate
            % ] (9,0) -- (10,0) 
            % node [pos=0.5,anchor=north,yshift=-0.3cm] {\shortstack{$\Delta - 3n -1$ \\ many}};
        
            % \draw [
            %     thick,decoration={brace,mirror,raise=0.2cm},decorate
            % ] (11,0) -- (12,0) 
            % node [pos=0.5,anchor=north,yshift=-0.3cm] {\shortstack{$\Delta - 3n -1$ \\ many}};
        \end{tikzpicture} \\
        where segments of the form
        \begin{tikzpicture}[baseline=0.5cm]
            \node[bvertex] (root) at (0.5,1) {};
            \node[bvertex] (lleaf) at (0,0) {};
            \node (dots) at (0.5,0) {$\dots$};
            \node[avertex] (rleaf) at (1,0) {};
        
            \draw (root) -- (lleaf);
            \draw (root) -- (dots);
            \draw (root) -- (rleaf);
        
            \node (label) at (0.2,1) {$\beta_j$};
        \end{tikzpicture}
        consist of vertices from subtrees \\
        $A_{j_1} = 
        \begin{tikzpicture}[baseline=0.5cm]
            \node[avertex] (root) at (0.5,1) {};
            \node[avertex] (lleaf) at (0,0) {};
            \node (dots) at (0.5,0) {$\dots$};
            \node[avertex] (rleaf) at (1,0) {};
        
            \draw (root) -- (lleaf);
            \draw (root) -- (dots);
            \draw (root) -- (rleaf);
        
            \node (label) at (0.1,1) {$\alpha_{j_1}$};
        \end{tikzpicture}$,
        $A_{j_2}=
        \begin{tikzpicture}[baseline=0.5cm]
            \node[avertex] (root) at (0.5,1) {};
            \node[avertex] (lleaf) at (0,0) {};
            \node (dots) at (0.5,0) {$\dots$};
            \node[avertex] (rleaf) at (1,0) {};
        
            \draw (root) -- (lleaf);
            \draw (root) -- (dots);
            \draw (root) -- (rleaf);
        
            \node (label) at (0.1,1) {$\alpha_{j_2}$};
        \end{tikzpicture}$,
        $A_{j_3}=
        \begin{tikzpicture}[baseline=0.5cm]
            \node[avertex] (root) at (0.5,1) {};
            \node[avertex] (lleaf) at (0,0) {};
            \node (dots) at (0.5,0) {$\dots$};
            \node[avertex] (rleaf) at (1,0) {};
        
            \draw (root) -- (lleaf);
            \draw (root) -- (dots);
            \draw (root) -- (rleaf);
        
            \node (label) at (0.1,1) {$\alpha_{j_3}$};
        \end{tikzpicture}$ 
        and $B_j = 
        \begin{tikzpicture}[baseline=0.5cm]
            \node[bvertex] (root) at (0.5,1) {};
            \node[bvertex] (lleaf) at (0,0) {};
            \node (dots) at (0.5,0) {$\dots$};
            \node[bvertex] (rleaf) at (1,0) {};
        
            \draw (root) -- (lleaf);
            \draw (root) -- (dots);
            \draw (root) -- (rleaf);
        
            \node (label) at (0.2,1) {$\beta_j$};
        \end{tikzpicture}$ 
        of $T$
        \caption{The intended caterpillar $C$ for~$T$.}
        \label{fig:Tree-into-Caterpillar_C}
    \end{figure}
    It remains to show that $\cost(T,C)=W$.
    To see this, first observe that the cost $|E(T)|$ is inevitable for any caterpillar $C$.
    As the edges between some $\alpha_i$ to any of its leaves get stretched to paths of length 2, they induce additional cost 1. There are $\sum_{i=1}^{3n} (a_i -1) = \kappa_0 - 3n$ such edges. 
    The cost $\sigma(n + \kappa_1)$ arise from the stretching of edges of the form $\beta_1\beta_j$ and $\beta_1\beta^*_k$; they are Gauss sums over the distances from $\beta_1$ to all the $\beta_j,\beta^*_k$ on the left and on the right of $\beta_1$, respectively.
    Note that they do not include the cost already covered by the summand $|E(T)|$ and that they are sensitive to the position of $\beta_1$---and thus also to the parity of $n + \kappa_1$. 
    It is now easy to check that these cost are precisely $\sigma(n + \kappa_1)$. 
    Lastly, consider the cost induced by edges of the form $\beta_1\alpha_i$. 
    As the $\alpha_i$ appear as triples as legs of some $\beta_j$ and the cost induced by the $\alpha_j$ that are $\beta_1$-legs are covered by $|E(T)|$, we get the additional cost of $\lambda(n)$. 
    The edges between $\beta_j, \beta^*_k$ and their leaves do not get stretched. Thus, $\cost (T,C) = W$.
\end{proof}

Before proving the converse direction, we give two simple lemmas about optimal caterpillars for graphs.
The first lemma says that in an optimal caterpillar,
% (of any graph into any other)
the spine vertex of a segment has maximal degree among all the vertices in that segment.
\begin{lemma}[Spine-Degree-Lemma]%; \appref{lm:degree_lemma}]
    Let $G$ be a graph and let $C$ be an optimal caterpillar for $G$. Let $S$ be a segment of $C$ with spine node $s \in V(S)$. Then, the following holds:
    \begin{itemize}
        \item $\deg_G (s) = \max \{ \deg_G (v) ~|~ v \in V(S)\}$, and
        \item swapping $s$ with $v \in V(S)$ for which $\deg_G(s) = \deg_G (v)$ holds also yields an optimal caterpillar for $G$.
    \end{itemize}
    \label{lm:degree_lemma}
\end{lemma}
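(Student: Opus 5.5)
Both parts follow from one swap argument. Fix the segment $S$ with spine vertex $s$ and a vertex $v \in V(S)\setminus\{s\}$, so $v$ is a leg of $s$. Let $C'$ be the caterpillar obtained from $C$ by exchanging the roles of $s$ and $v$. That is, $v$ becomes the spine vertex, it takes over the spine neighbours of $s$ and all other legs of $s$, and $s$ becomes a leg of $v$. Then $C'$ has the same spine shape and the same segment sizes. Hence it is again a caterpillar of the same type: the degree bound and the $\Delta$-legged condition are preserved. The plan is to compare $\cost(G,C')$ with $\cost(G,C)$ edge by edge.

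\textbf{Distance changes.} Distances in a caterpillar are easy to describe. For $x\neq y$ we have
\[
\dist_C(x,y)=|i_x-i_y|+[x \text{ is a leg}]+[y\text{ is a leg}],
\]
except when $x,y$ lie in the same segment, where the distance is $1$ or $2$. From this we read off how each distance changes under the swap.
\begin{itemize}
\item Any pair $x,y\notin\{s,v\}$ keeps its distance.
\item The pair $s,v$ keeps distance $1$.
\item For $w\notin\{s,v\}$, the distance $\dist(v,w)$ decreases by exactly $1$ and $\dist(s,w)$ increases by exactly $1$. In $C$, $v$ is one step further than $s$ from every such $w$, and in $C'$ the roles are reversed. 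This holds both for $w$ in the same segment (distance $2$ versus $1$) and for $w$ in other segments.
\end{itemize}

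\textbf{Cost comparison.} Summing over the edges of $G$, an edge $sw$ or $vw$ with $w\notin\{s,v\}$ contributes $+1$ or $-1$ respectively, and every other edge contributes $0$. Thus
\[
\cost(G,C')=\cost(G,C)+\bigl(\deg_G(s)-[sv\in E(G)]\bigr)-\bigl(\deg_G(v)-[sv\in E(G)]\bigr)=\cost(G,C)+\deg_G(s)-\deg_G(v).
\]
By optimality of $C$ we get $\deg_G(s)\ge\deg_G(v)$ for every $v\in V(S)$, which is the first item. If $\deg_G(s)=\deg_G(v)$, then $\cost(G,C')=\cost(G,C)$, so $C'$ is optimal, which is the second item.

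The only step needing care is the uniform $\pm1$ claim for distances to third vertices. The cases to verify are $w$ in the same segment versus another segment, and whether $w$ is itself a spine vertex or a leg. The potential exception is the edge $sv$, whose distance stays $1$; it is handled by the indicator terms above, which cancel. I also have to check that the swap preserves membership in the relevant caterpillar class, including the endpoint segments, since segment sizes are unchanged.
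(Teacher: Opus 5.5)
Your proof is correct and uses the same swap argument as the paper: exchanging $s$ with a leg $v$ changes the cost by exactly $\deg_G(s)-\deg_G(v)$, and the edge $sv$ (if present) cancels out. The paper handles $sv$ by a case split on $sv\in E(G)$, while you use Iverson brackets. Your write-up is slightly more careful than the paper's in two ways: it states the direction of the contradiction correctly (a leg of strictly larger degree would give a strictly cheaper caterpillar), and it explicitly checks that the swap preserves the degree or leg bound.
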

% \appendixproof{lm:degree_lemma}{
\begin{proof}
    Swapping $s$ with a node $u \in V(S)$ for which $\deg_G(s) > \deg_G (u)$ holds would increase the cost by $\deg_G(s)$ but only decrease them by $\deg_G (u)$ if $su \notin E(G)$; if $su \in E(G)$, then the cost increase by $\deg_G(s)-1$ and decrease by $\deg_G(u)-1$. Hence, $\deg_G (s) = \max \{ \deg_G (v) ~|~ v \in V(S)\}$. Swapping $s$ with $v \in V(S)$ for which $\deg_G(s) = \deg_G (v)$ holds leaves the cost unchanged. Thus, the resulting caterpillar of such a swap is also optimal.
\end{proof}
% }

The second lemma characterizes minimal linear arrangements of star graphs up to isomorphism.
\begin{lemma}[Embedding Stars into Paths]%; \appref{lm:star_into_path}]
  \label{lm:star_into_path}
    Let $S = (V,E)$ be a star, that is, $V = \{c\}\cup \{v_1,\dots,v_{n-1}\}$ and $E = \{cv_i ~|~ 1\leq i \leq n-1\}$.
    Then, the optimal embeddings of $S$ into a path $p_1 \dots p_n$ are the embeddings where 
    \[ c = 
    \begin{cases}
        p_{\lceil n/2 \rceil} \quad &\text{if } n \text{ is odd}, \\
        p_{n/2} \text{ or } p_{n/2+1} &\text{if } n \text{ is even}.
    \end{cases}
    \]
\end{lemma}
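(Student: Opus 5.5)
Fix an embedding, i.e.\ a bijection of $V$ onto the path $p_1\dots p_n$, and suppose the center $c$ is placed at $p_k$. Since every edge of $S$ is incident to $c$, the cost depends only on $k$:
\[
  f(k):=\sum_{j=1}^{n}|k-j| = \frac{(k-1)k}{2}+\frac{(n-k)(n-k+1)}{2}.
\]
The positions of the leaves among the remaining path vertices are irrelevant, because each leaf contributes exactly its distance to $p_k$ and the leaves together fill all positions except $k$. So optimal embeddings are exactly those placing $c$ at a minimizer of $f$ over $k\in\{1,\dots,n\}$, and it remains to identify these minimizers.

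To find them, I would use the forward difference
\[
  f(k+1)-f(k) = k-(n-k) = 2k-n .
\]
This can be read directly from the sum: moving $c$ one step right brings it one closer to the $n-k$ positions to its right and one farther from the $k$ positions $p_1,\dots,p_k$. Hence $f$ strictly decreases while $2k<n$, strictly increases while $2k>n$, and stays constant exactly when $2k=n$.

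For odd $n$, the difference is negative for $k\le (n-1)/2$ and positive for $k\ge (n+1)/2$. So the unique minimizer is $k=(n+1)/2=\lceil n/2\rceil$. For even $n$, the difference is negative for $k<n/2$, zero at $k=n/2$, and positive afterwards. So the minimizers are exactly $k=n/2$ and $k=n/2+1$, as claimed.

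There is no real obstacle in this argument. The only point needing care is to state explicitly that the leaf arrangement does not affect the cost, so that every embedding with $c$ at a minimizing position is optimal. This gives the "are the embeddings" characterization in both directions.
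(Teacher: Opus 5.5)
Your proof is correct and is essentially the paper's argument: the paper moves $c$ from $p_j$ to $p_{j+1}$ for $j<n/2$ and computes the cost change $(j-1)-(n-j-1)=2j-n$, which is exactly your forward difference $f(k+1)-f(k)$. Your sign analysis also makes explicit the converse direction, that every embedding with $c$ at a middle position is optimal (including both positions for even $n$, via $f(n/2+1)=f(n/2)$), which the paper leaves implicit.
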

% \appendixproof{lm:star_into_path}{
\begin{proof}
    Let $S$ be optimally embedded into the path $p_1 \dots p_n$. Note that by symmetry, the cost and thus the optimality of an embedding is determined by the position of $c$. 
    Assume $c$ is not at a position as specified in the statement of the lemma. Without loss of generality assume $c= p_j$ with $j < n/2$.
    Then, swapping $c$ with the vertex $p_{j+1}$ stretches the edges from $c$ to $p_1,\dots, p_{j-1}$ and thus produces additional cost $j-1$ but shortens the distance to $p_{j+2},\dots, p_n$, decreasing the cost by $n-(j+1)$. 
    As $j < n/2$,
    \[j-1 < n-(j+1) \quad \Longleftrightarrow \quad 2j <n\]
    and thus such a swap improves the embedding  which is a contradiction to the assumption. 
    As swapping improves the embedding until $c$ is in the ``middle'' of the path, that is, $c$ is at the position specified in the statement of the lemma, $c$ must be at one of these positions. 
\end{proof}
% }

Now, we are ready to prove the converse direction. We show that if there is a $\Delta$-legged caterpillar $C$ with $\cost(T,C) \leq W$, then we can assume $C$ to have the same structure as in the forward direction.
This allows us to determine a partition of the index set from legs of specific segments. Moreover, the cost must be tight, that is, $\cost(T,C) = W$. 
\begin{lemma}
    If $T$ is constructed and $\Delta, W \in \NN$ are chosen from a given \textsc{3-Partition} instance $\{a_1,\dots,a_{3n}\}$ and $B$ as above and there is a caterpillar $C$ such that $\cost(T,C) \leq W$, then there is a partition of $\{1, \dots, 3n\}$ into parts $P_1, \dots ,P_n$ such that $P_j = \{{j_1},{j_2},{j_3}\}$ and $a_{j_1} + a_{j_2} + a_{j_3} = B$ for all $1 \leq j \leq n$. Moreover, $\cost(T,C) = W$.
    \label{lm:Tree-into-Caterpillar_<=}
\end{lemma}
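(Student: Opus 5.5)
We take a $\Delta$-legged caterpillar $C$ with $\cost(T,C)\le W$. We repeatedly normalize it, never increasing the cost, until it has the structure of \Cref{lm:Tree-into-Caterpillar_=>}. Then we read off the partition. Throughout, the slack available is
\[
W-|E(T)| = \sigma(n+\kappa_1)+\lambda(n)+(\kappa_0-3n),
\]
and the constants $\kappa_1,\kappa_2$ are chosen so that any ``wrong'' structural decision costs more than this slack.

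\textbf{Step 1: every $\beta_j$ and every $\beta^*_k$ is a spine vertex in its own segment.} Suppose $\beta_j$ ($j\ge2$) or $\beta^*_k$ is not a spine vertex. Then each of its $\ge \kappa_2+B-3n-1$ leaf edges is stretched to length at least $2$, unless the leaf is the spine vertex. At most one leaf can be the spine vertex, so the excess is about $\kappa_2 - O(n)$. Since $\kappa_2 > (\kappa_0-3n)+\sigma(n+\kappa_1)+\lambda(n)$, this exceeds the slack. Similarly, two of these roots cannot share a segment, since one would have to be a leg. For $\beta_1$ we use the \Cref{lm:degree_lemma}: $\beta_1$ has the largest degree in $T$, so it is a spine vertex. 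Next we show that each root's leaves are its legs. A stray leaf costs extra $\ge 1$. We would then swap it with a vertex of the root's own segment that is not one of its leaves; a segment has $\Delta$ legs. This is the ``simple swapping argument'' advertised, and we would do the bookkeeping of degrees via the \Cref{lm:degree_lemma}. After this step the spine contains $n+\kappa_1$ root segments. The remaining capacity is $B-3n-1$ in each $\beta^*_k$ segment, and $B$ in each $\beta_j$ segment minus the leaves already placed there.

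\textbf{Step 2: no extra spine vertices, and placement of the $A_i$.} Any additional spine vertex, e.g.\ some $\alpha_i$, lengthens the spine. It must lie somewhere relative to $\beta_1$, and it then pushes at least half of the $\kappa_1$ star roots one step farther from $\beta_1$. This costs about $\kappa_1/2 > (\kappa_0-3n)+\lambda(n)+1$, which is exactly how $\kappa_1$ was defined. Hence the spine consists exactly of the roots. The edges $\beta_1\beta_j$ and $\beta_1\beta^*_k$ then contribute at least $\sigma(n+\kappa_1)$ beyond $|E(T)|$, by the star-into-path bound \Cref{lm:star_into_path} applied to the spine. The $\alpha$-leaves are not adjacent to spine vertices in $T$, so they contribute at least $\kappa_0-3n$. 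This leaves only $\lambda(n)$ for the edges $\beta_1\alpha_i$. We would show that $\lambda(n)$ is the minimum of these edges only when every $\alpha_i$ lies in a segment of some $\beta_j$, with the $\beta_j$ packed around $\beta_1$. To do so, we need to justify that any $\alpha$-vertex in a $\beta^*_k$ segment can be exchanged to reduce cost, so that w.l.o.g.\ the $\beta_j$ occupy the central positions. Moreover, each $\alpha_i$ must sit in the same segment as its leaves; otherwise a leaf edge gets length $\ge 3$ and exceeds the budget.

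\textbf{Step 3: reading off the partition.} Now all $\kappa_0=nB$ vertices of the $A_i$ fill the $n$ segments of the $\beta_j$, each of capacity exactly $B$. So every such segment holds exactly $B$ of them. Each $A_i$ is entirely in one segment, so the index sets $P_j$ satisfy $\sum_{i\in P_j}a_i=B$. The condition $B/4<a_i<B/2$ then forces $|P_j|=3$. All lower bounds are met with equality, giving $\cost(T,C)=W$. The main obstacle I expect is Step 2, in particular the exact accounting of the $\lambda(n)$ term. We must show that the distances from $\beta_1$ to the $\alpha_i$ cannot be lowered by unbalancing the $\beta_j$ positions or by trading $\sigma$-cost for $\lambda$-cost. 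I would first try an exchange argument showing that moving a heavy $\beta_j$ segment inward never hurts, and then compare against the parity cases in $\sigma$.
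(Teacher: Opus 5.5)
Your Step 1 (every root in $\mathcal{B}$ is a spine vertex whose leaves are its legs) is essentially the paper's claims (a)--(b). Your Step 3 read-off (capacity $B$, total $nB$, then $B/4<a_i<B/2$ forces $|P_j|=3$) is correct once every $A_i$ lies inside a single $\beta_j$-segment. The gap is in Step 2: you never establish where the $\alpha_i$ go, and the route you sketch cannot work as stated.

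First, the free capacity of a $\beta^*_k$-segment is $\Delta-(\kappa_2+B-3n-1)=3n+1$, not $B-3n-1$. This small capacity is exactly what the $B^*_k$ gadgets exist for.

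Second, your claim that the edges $\beta_1\alpha_i$ cost at least $\lambda(n)$ beyond $|E(T)|$, with equality only in the intended configuration, is false on its own. Putting a fourth $\alpha_i$ into $\beta_1$'s segment, or some $\alpha_i$ into $\beta^*$-segments close to $\beta_1$, makes these edges cheaper than $\lambda(n)$. What pays for this is the $\alpha$-leaves that no longer fit and must sit in other segments. So the real trade-off is between the $\lambda(n)$-term and the $\alpha$-leaf term. It is not between $\sigma$ and $\lambda$: once there are three $\alpha_i$ per $\beta_j$-segment, both of those bounds hold simultaneously for every arrangement and are attained together. For the same reason, ``a leaf edge of length $\ge 3$ exceeds the budget'' is circular. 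It presupposes the $\lambda(n)$ bound, which itself presupposes the distribution of the $\alpha_i$.

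The missing idea is to scale the \textsc{3-Partition} instance first, as the paper does, so that w.l.o.g.\ $B\ge 3n+1$, $a_i\ge\lambda(n)+3n+4$, and $a_i\ge B/4+\lambda(n)+1$. Then the misplaced cases are ruled out by counting alone:
\begin{itemize}
\item An $\alpha_i$ in a $\beta^*_k$-segment leaves at least $\lambda(n)+3$ of its leaves outside that segment.
\item Four $\alpha_i$ in one $\beta_j$-segment (free capacity $B$) leave at least $4\lambda(n)+4$ outside.
\item Each displaced leaf costs at least one unit beyond the $\kappa_0-3n$ baseline, and all but two of them cost at least two.
\end{itemize}
Together with $\sigma(n+\kappa_1)$, this exceeds $W$ no matter how cheap the $\beta_1\alpha_i$ edges are. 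This pins down exactly three $\alpha_i$ per $\beta_j$-segment and none elsewhere. Only then is $\lambda(n)$ a valid lower bound, so that the budget forces each $A_i$ into one segment.

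Without such scaling, your exchange argument has no leverage. $a_i>B/4$ only guarantees that four $\alpha_i$ overflow a $\beta_j$-segment by at least one vertex, and a displaced leaf costs just one extra unit. Pulling an $\alpha_i$ from a far $\beta_j$-segment into $\beta_1$'s segment can save about $n/2$.

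The bound $B\ge 3n+1$ is also what makes your Step 1 estimate work for a $\beta^*_k$ that is a leg. That excess is only $\kappa_2+B-3n-2$, while the slack is $\kappa_2-B-1$.

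Separately, your claim that any extra spine vertex pushes half of the roots away from $\beta_1$ fails for one placed at an end of the spine. The paper instead uses that an $\alpha_i$ on the spine gives $\beta_1$ a neighbour in an $(n+\kappa_1)$-th other segment. \Cref{lm:star_into_path} then charges the farthest one at least $\lceil (n+\kappa_1)/2\rceil-1$.
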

\begin{proof}
  Let $(T, W, \Delta)$ be an instance of \ticShort{} as constructed in \cref{lm:Tree-into-Caterpillar_=>}.
  Assume there is a $\Delta$-legged caterpillar $C$ with $V(C) = V(T)$ and $\cost(T,C) \leq W$.
    Our goal is to show that the corresponding \textsc{3-Partition} instance $a_1,\dots,a_{3n}$ and $B$ is a YES-instance. We will first sketch the proof and then prove the individual steps. Let $\mathcal{B} = \{\beta_1,\dots,\beta_n, \beta^*_1,\dots,\beta^*_{\kappa_1}\}$.
    \begin{enumerate}[label=(\alph*)]
        \item The set of spine vertices of $C$ contains $\mathcal{B}$. This will be a consequence of the fact that no two $\beta, \beta' \in \mathcal{B}$ are in the same segment of $C$ together with the Spine-Degree-Lemma (\Cref{lm:degree_lemma}).
        \item Now that we know that $\mathcal{B}$ is a subset of the spine vertices of $C$, we may assume that for all $\beta \in \mathcal{B}$, the $\beta$-leaves must be in the segment of $C$ that has $\beta$ as spine vertex.
        \item Using \Cref{lm:star_into_path}, we show that no $\alpha_i$ is a spine vertex.
        \item By (b) and the choice of $\Delta$, having an $\alpha_i$ as a leg of some $\beta^*_k$ would exceed the cost $W$. Hence, this can not happen.
        \item By scaling, we show that at most three $\alpha_i$ can be in the same segment of $C$. By (d), we know that exactly three $\alpha_i$ are in the same segment of $C$ whose spine vertex is some $\beta_j$.
        \item Finally, as $\cost (T,C) \leq W$, we can prove that all $\alpha_i$-leaves are in the segment that contains $\alpha_i$---otherwise, we would have $\cost (T,C) > W$. As each spine vertex has at most $\Delta = \kappa_2 + B$ many legs, $\kappa_2$ of them are occupied by the $\beta_j$-leaves and all $\alpha_i$ and their leaves are in the same segment, each segment $S_k$ of $C$ with $\{\alpha_{i_1},\alpha_{i_2},\alpha_{i_3}\} \subseteq S_k$ corresponds to the sum $a_{i_1} + a_{i_2} + a_{i_3} = B$. Therefore, we can conclude that $a_1,\dots,a_{3n}$ and $B$ is a YES-instance of \textsc{3-Partition}.
    \end{enumerate}
    Note that the last step also implies that $\beta_1$ is in the ``middle'' of the spine and the $\beta_j$ are then arranged as close as possible to $\beta_1$. 
    So, any caterpillar $C'$ such that $\cost(T,C') \leq W$ can be assumed to be of the form as in the first part of the correctness proof (\Cref{lm:Tree-into-Caterpillar_=>}).
    
    We now show the individual steps. In the following, by scaling $a_1,\dots,a_{3n}$ and $B$, we assume without loss of generality that
    (1) $B \ge 3n+1$,
    (2) $a_i \ge \lambda(n) + 3n + 4$,
    (3) $a_i \ge B/4 + \lambda(n) + 1$
%     \todo{LKu: Readable like that or is (enumerated) align better? Numbering is important and referenced later.}
    where we get the third assumption by scaling with $4(\lambda(n) + 1)$, as initially $a_i \ge B/4 + 1/4$.
    
    % step 1
    \begin{claim}[a]
        There are no two $\beta , \beta' \in \mathcal{B}$ that are mapped to the same segment $S$.
        \label{lm:no_2_beta(*)_same_segment}
    \end{claim}
    \begin{proof}
        Assume for contradiction that there are $\beta , \beta' \in \mathcal{B}$ with $\beta , \beta' \in V(S)$. The subtrees of $\beta , \beta'$ together have at least $2\kappa_2+2$ many vertices and at most $|V(S)|=\kappa_2+B+1$ many of them are in $V(S)$. 
        By assumption, $\beta , \beta' \in V(S)$. Thus, at least $\kappa_2 - B + 1$ many leaves of $\beta , \beta'$ are in other segments. 
        Each of these incurs cost of at least two, that is, additional cost of at least one. 
        We have $\cost(T,C) \ge |E(T)| + \kappa_2-B+1 > W$ by choice of $\kappa_2$. Thus, $\cost(T,C) > W$ which is a contradiction.
    \end{proof}
    
    By \Cref{lm:no_2_beta(*)_same_segment} and the Spine-Degree-Lemma (\Cref{lm:degree_lemma}), we may assume that all $\beta \in \mathcal{B}$ are spine vertices of $C$. 
    Next, we show that we may assume that all leaves of each $\beta \in \mathcal{B}$ are in the segment containing $\beta$.
    
    % step 2
    \begin{claim}[b]%,\appref{lm:beta(*)_leaves_together}]
        We may assume that for each $\beta \in \mathcal{B}$ all $\beta$-leaves are in the segment $S$ that contains~$\beta$.
        \label{lm:beta(*)_leaves_together}
    \end{claim}
%    \appendixproof{lm:beta(*)_leaves_together}
%    {
    \begin{proof}
        Let $\beta \in \mathcal{B}$ and $S$ be the segment that contains $\beta$. 
        Let $\hat\beta$ be a leaf of $\beta$ with $\hat\beta \notin V(S)$. 
        If $S$ does not have $\Delta$ legs yet, then making $\hat\beta$ a new leg of $S$ strictly decreases the cost. 
        Otherwise, there is at least one leaf $\ell \in V(T)$, that is not a $\beta$-leaf, that is a leg of $S$ as there are only $3n$ non-leaves in $T$, excluding $\beta' \in \mathcal{B}$ (namely $\alpha_i$) and as $B \ge 3n+1$ by (1). 
        By swapping $\ell$ with $\hat\beta$, the cost decreases by $d$ and increases by at most $d$, where $d$ denotes the distance of their spine vertices in $C$.
        By iteratively repeating this argument, the claim follows.
    \end{proof}
%    }
    Next, we show that no $\alpha_i$ is a spine vertex. That is, all $\alpha_i$ are legs of $C$.
    
    % step 3
    \begin{claim}[c]%,\appref{lm:no_alpha_spine}]
        There is no $\alpha_i$ that is a spine vertex.
        \label{lm:no_alpha_spine}
    \end{claim}
%    \appendixproof{lm:no_alpha_spine}
%    {
    \begin{proof}
        Assume towards a contradiction that $\alpha_i$ is a spine vertex. 
        By the Spine-Degree-Lemma (\Cref{lm:degree_lemma}), no $\beta \in \mathcal{B}$ is a leg of $\alpha_i$. 
        Thus, $C$ has at least $n + \kappa_1 + 1$ many segments containing neighbors of $\beta_1$ (or $\beta_1$ itself). 
        The farthest of them incurs an additional cost of $\left\lceil \frac{n + \kappa_1}{2} \right\rceil - 1$. 
        By \Cref{lm:star_into_path}, we have $\cost(T,C) \ge |E(T)| + \sigma(n + \kappa_1) + \left\lceil \frac{n + \kappa_1}{2} \right\rceil - 1 > W$ by choice of $\kappa_1$. Thus, $\cost(T,C) > W$ which is a contradiction.
    \end{proof}
%    }
    By the Spine-Degree-Lemma (\Cref{lm:degree_lemma}), it follows that no $\alpha_i$ is a leg of an $\alpha_i$-leaf that is a spine vertex.
    Next, we show that no $\alpha_i$ is a leg of $\beta_j^*$ and that at most three $\alpha_i$ are in a segment with spine vertex $\beta_{j'}$.
    
    % step 4
    \begin{claim}[d]%,\appref{lm:alpha_not_under_beta*}]
        No $\alpha_i$ is a leg of $\beta_j^*$.
        \label{lm:alpha_not_under_beta*}
    \end{claim}
%    \appendixproof{lm:alpha_not_under_beta*}
%    {
    \begin{proof}
        Assume towards a contradiction that there is an $\alpha_i$ that is a leg of $\beta_j^*$ and let $S$ be the segment that contains $\beta_j^*$. 
        By (2), the subtree of $\alpha_i$ has $a_i \ge \lambda(n) + 3n + 4$ many vertices and by \Cref{lm:beta(*)_leaves_together} at most $3n+1$ many of them are legs in $S$. 
        By assumption, $\alpha_i \in V(S)$. Thus, at least $\lambda(n) + 3$ many leaves of $\alpha_i$ are in other segments. 
        The cost incurred by the edges between $\alpha_i$ and its leaves is at least two and for at least $\lambda(n) + 1$ many at least three (as two $\alpha_i$-leaves could be the neighboring spine vertices), that is, additional cost of at least one and at least two, respectively. %The term $(\kappa_0 - 3n)$ counts both of these cost with weight one and the term $\lambda(n) + 1$ counts the rest of the additional cost.
        We have $\cost(T,C) \ge |E(T)| + \sigma(n + \kappa_1) + (\kappa_0 - 3n) + \lambda(n) + 1 > W$. Thus, $\cost(T,C) > W$ which is a contradiction.
    \end{proof}
%    }
    % step5
    \begin{claim}[e]%,\appref{lm:max_3_alphas_under_beta}]
        At most three $\alpha_i$ are in the same segment $S$ with spine vertex $\beta_j$.
        \label{lm:max_3_alphas_under_beta}
    \end{claim}
%	\appendixproof{lm:max_3_alphas_under_beta}
%	{
    \begin{proof}
        Assume towards a contradiction that there are $\alpha_{i_1},\alpha_{i_2},\alpha_{i_3},\alpha_{i_4} \in V(S)$ for a segment $S$ with spine vertex $\beta_j$. 
        By (3), the subtrees $A_{i_k}$ of $\alpha_{i_k}$, $1 \leq k \leq 4$ together have at least $B + 4\lambda(n) + 4$ many vertices and by \Cref{lm:beta(*)_leaves_together} at most $B$ many of them are legs in $S$. 
        By assumption, $\alpha_{i_k} \in V(S)$, $1 \leq k \leq 4$. Thus, at least $4\lambda(n) + 4$ many leaves of $\alpha_{i_k}$, $1 \leq k \leq 4$ are in other segments. 
        The cost incurred by the edges between $\alpha_{i_k}$, $1 \leq k \leq 4$ and their leaves is at least two and for at least $4\lambda(n) + 2$ many of them at least three, that is, additional cost of at least one and at least two, respectively.
        We have $\cost(T,C) \ge |E(T)| + \sigma(n + \kappa_1) + (\kappa_0 - 3n) + 4\lambda(n) + 2 > W$. Thus, $\cost(T,C) > W$ which is a contradiction.
    \end{proof}
%    }
    By \Cref{lm:no_alpha_spine}, \Cref{lm:alpha_not_under_beta*} and \Cref{lm:max_3_alphas_under_beta} and the fact that there are exactly $n$ many segments with spine vertex $\beta_j$, we conclude that exactly three $\alpha_i$ are in each segment with a spine vertex $\beta_j$.
    
    % step 6
    \begin{claim}[f]%,\appref{lm:alpha_leaves_in_alphas_segment}]
        For all $1 \leq i \leq 3n$, if $\alpha_i \in S$ for some segment $S$, then all $\alpha_i$-leaves are in~$S$.
        \label{lm:alpha_leaves_in_alphas_segment}
    \end{claim}
%	\appendixproof{lm:alpha_leaves_in_alphas_segment}
%	{
    \begin{proof}
        Let $c$ be the additional cost induced by edges from the $\alpha_i$ to their leaves. 
        As $c = (\kappa_0 -3n)$ only if each $\alpha_i$ and their leaves get mapped to the same segment, and we already have cost $\geq |E(T)| + \sigma(n + \kappa_1) + \lambda(n)$ by the previous claims and $\cost (T,C) \leq W$, the claim must be true.
    \end{proof}
%    }
    Let $S$ be a segment with spine node $\beta_j$ containing $\alpha_{j_1},\alpha_{j_2},\alpha_{j_3}$ and their leaves. As $S$ also contains all $\kappa_2$ many $\beta_j$-leaves and we have the upper bound $\Delta = \kappa_2 + B$ on the maximal number of legs each segment can have, $\alpha_{j_1},\alpha_{j_2},\alpha_{j_3}$ and their leaves take up at most $B$ many leg positions. As $a_1 + \dots + a_{3n} = nB$, we have $a_{j_1} + a_{j_2} + a_{j_3} = B$. Hence, we can read off the parts  $P_1, \dots, P_n$ of the partition from the legs of the segments $\seg(\beta_1),\dots, \seg(\beta_n)$, that is, 
    $ P_j := \{{j_1}, {j_2}, {j_3}\}$ if $\alpha_{j_1},\alpha_{j_2},\alpha_{j_3} \in \seg(\beta_j)$.
\end{proof}

To complete the proof of \Cref{thm:Tree-into-Caterpillar} (in the $\Delta$-legged case), observe that we can construct the instance $(T, W, \Delta)$ from the \textsc{3-Partition} instance $a_1,\dots,a_{3n}$ and $B$ in polynomial time as \textsc{3-Partition} is strongly \NP-hard. 

\begin{remark}
    To see why \Cref{thm:Tree-into-Caterpillar} also holds if we bound the degree instead of the maximal number of legs, we set the degree bound to $\Delta + 2$.
    Then, carefully scanning the proof above shows that they almost hold for the degree bounded case as well. 
    The only way the proof breaks is when the \textsc{$3$-Partition} instance contains $a_{i_1}, a_{i_2}, a_{i_3}$ such that $a_{i_1} + a_{i_2} + a_{i_3} = B+1$. 
    Then, we could give $\alpha_{i_1}, \alpha_{i_2}, \alpha_{i_3}$ and all but one of their leaves as legs to an outermost spine vertex $s$. The remaining leaf is then added as a new spine vertex next to $s$ on the spine. 
    This breaks \Cref{lm:alpha_leaves_in_alphas_segment} in the proof above. 
    However, we can easily circumvent this edge case by scaling the \textsc{$3$-Partition} instance by $2$. Then, $a_{i_1} + a_{i_2} + a_{i_3}$ is even and thus $a_{i_1} + a_{i_2} + a_{i_3} \neq B+1$.
\end{remark}

\begin{remark}
    Notably, the above proof also shows \NP-hardness for a variant of \tic{} where the caterpillar $C'$ is given in the input and the task is to find a bijection $\pi : V(T) \to V(C')$ that minimizes the cost function $\sum_{uv \in E(T)} \dist_C (\pi(u), \pi(v))$. 
    Now, choosing $T$ as above and $C'$ isomorphic to $C$ yields the desired reduction from \textsc{3-Partition}.
\end{remark}

\section{Conclusion}\label{sec:conclusion}

We studied \gicShort{}, a generalization of minimum linear arrangement (MLA), where the goal is to design a caterpillar $H$ with $V(H)=V(G)$ and maximum degree $\Delta$ (instead of a path)
that minimizes \(\sum_{uv\in E(G)}\dist_H(u,v)\).
On the one side, we show that we can lift any $\alpha$-approximation for MLA to an (\(\alpha+3-2/(\Delta-1)\))-approximation for \gicShort.
On the other side the problem remains \NP-hard
when $\Delta \ge 2$ is constant,
and when $G$ is a tree.
The latter result depicts a stark contrast to MLA, which is polynomial-time solvable when $G$ is a tree~\cite{chung1984optimal}.
We conclude with some future research directions.

First, can the approximation ratios be improved (or complemented by lower bounds)?
For general graphs, any approximation better than \(O(\sqrt{\log n}\log\log n)\) would rely on an entirely new approach. 
For trees, the situation seems more hopeful:
MLA is polynomial-time solvable here~\cite{chung1984optimal}, so the constant \(4\) comes entirely from the additive \(\Theta(|E(G)|)\) slack in~\starfy{}, which vanishes on instances where the optimum is \(\omega(|E(G)|)\).
A tailored algorithm exploiting the tree structure of \(G\) directly might yield a smaller constant.

The second research direction revolves around the degree bound $\Delta$.
Is \gicShort{} on trees fixed-parameter tractable with respect to $\Delta$?
And what happens, if we drop the degree bound?
Then we would always be ``allowed'' to choose a star as our caterpillar; however this need not be optimal.
The complexity of this case is left open, even if $G$ is a tree.

Looking at the bigger picture, one could study the problem of designing caterpillars for graphs under other objectives, such as the initially mentioned bandwidth or cutwidth measures.
Finally, considering other classes of graphs (as has been done in previous work~\cite{raspaud2000congestion,bezrukov2000congestion,avin2020demand,schmid2015splaynet,kellerhals2026designingapproximatebinarytrees})
may be a fruitful research avenue.

% \bibliography{references}

\bibliographystyle{splncs04}
\bibliography{references}

% \appendix

% \newpage

% \section{Missing Proof Details}

% \appendixProofs

% \appendixProofText

\end{document}